\def\colorful{0}
\newtheorem{theorem}{Theorem}[]
\newtheorem{lemma}{Lemma}[]
\newtheorem{informal theorem}[theorem]{Theorem (informal statement)}
\newtheorem{proposition}[theorem]{Proposition}
\newtheorem{claim}[theorem]{Claim}
\newtheorem{fact}[theorem]{Fact}
\newtheorem{definition}{Definition}
\newcommand{\eqdef}{\coloneqq}
\newcommand{\R}{\ensuremath{\mathbb R}}
\newcommand{\poly}{\operatorname{poly}}
\newcommand{\E}{\mathbf{E}}
\newcommand{\junk}[1]{}
\newcommand{\eps}{\varepsilon}
\newcommand{\dtv}{\operatorname{TV}}
\newcommand{\totalvardist}[2]{\totalvardistrestr{}{#1}{#2}}
\newcommand{\totalvardistrestr}[3]{\dtv^{#1}\mleft(#2, #3\mright)}
\DeclareMathOperator{\Median}{median}
\newcommand{\pmap}{ \hat{\varphi} }
\newcommand{\DY}{D_{\text{YES}}}
\newcommand{\DN}{D_{\text{NO}}}
\newcommand\snorm[2]{\left\| #2 \right\|_{#1}}
\newcommand{\norm}[1]{\lVert#1{\rVert}}
\newcommand{\normone}[1]{{\norm{#1}}_1}
\newcommand{\abs}[1]{\left | #1 \right |}
\newcommand{\lp}{\left}
\newcommand{\rp}{\right}
\newcommand\dchi[3]{ d_{\chi^2}^{#1}\lp( #2 \big\| #3\rp) }
\def \h {\mathbf h}
\newcommand{\B}{\mathcal B}
\newcommand{\I}{\mathcal I}
\newcommand{\wt}{\widetilde}
\newlength\myindent
\newcommand{\A}{\mathcal A}
\newcommand{\G}{\mathcal G}
\newcommand{\Q}{\mathcal Q}
\newcommand{\TV}{\text{TV}}
\newcommand{\bs}{\backslash}
\newcommand{\p}{\mathbf p}
\newcommand{\q}{\mathbf q}
\renewcommand{\S}{\mathcal S}
\DeclareMathOperator{\Poi}{Poi}
\def \Bad {\mathcal B}
\newcommand{\clg}[1]{\left\lceil #1 \right\rceil}
\newcommand{\proba}{\Pr}
\newcommand{\probaOf}[1]{\proba\!\left[\, #1\, \right]}
\newcommand{\expect}[1]{\mathbf{E}\!\left[#1\right]}
\newcommand{\poisson}[1]{\ensuremath{\operatorname{Poi}\!\left( #1 \right) }}
\providecommand{\poly}{\operatorname*{poly}}
\newcommand{\dst}{\eps}
\newcommand{\ns}{m}
\newcommand{\w}{\mathbf{w}}
	\newcommand{\inote}[1]{\footnote{{\bf [[Ilias: {#1}\bf ]] }}}
	\newcommand{\dnote}[1]{\footnote{{\bf [[Daniel: {#1}\bf ]] }}}
	\newcommand{\cnote}[1]{\footnote{{\bf [[Clement: {#1}\bf ]] }}}
	\newcommand{\snote}[1]{\footnote{{\bf [[Sihan: {#1}\bf ]] }}}
	\newcommand{\inote}[1]{}
	\newcommand{\dnote}[1]{}
	\newcommand{\cnote}[1]{}
	\newcommand{\snote}[1]{}
\newcommand{\algname}[1]{\textup{\sc{}#1}}
\title{Near-Optimal Bounds for Testing Histogram Distributions}
\author{
Cl\'ement L. Canonne\\
University of Sydney\\
{\tt clement.canonne@sydney.edu.au} \\
\and
Ilias Diakonikolas\thanks{Supported by NSF Medium Award CCF-2107079,
NSF Award CCF-1652862 (CAREER), a Sloan Research Fellowship, and
a DARPA Learning with Less Labels (LwLL) grant. Some of this work was performed while the author
was visiting the Simons Institute for the Theory of Computing.}\\
University of Wisconsin-Madison\\
{\tt ilias@cs.wisc.edu}\\
\and
Daniel M. Kane\thanks{Supported by NSF Medium Award CCF-2107547,
NSF Award CCF-1553288 (CAREER), and a Sloan Research Fellowship.}\\
University of California, San Diego\\
{\tt dakane@cs.ucsd.edu}\\
\and 
Sihan Liu\thanks{Some of this work was performed while the author was an undergraduate student 
at UW Madison, supported by Ilias Diakonikolas' Sloan Fellowship.}\\
University of California, San Diego\\
{\tt sil046@ucsd.edu} \\
}
\begin{document}

\maketitle

\begin{abstract}
We investigate the problem of testing whether a discrete probability distribution over an ordered domain 
is a histogram on a specified number of bins. 
One of the most common tools for the succinct approximation of data, 
{\em $k$-histograms} over $[n]$, are probability distributions 
that are piecewise constant over a set of $k$ intervals. 
The histogram testing problem is the following:
Given samples from an unknown distribution $\p$ on $[n]$, 
we want to distinguish between the cases that $\p$ is a $k$-histogram 
versus $\eps$-far from any $k$-histogram, in total variation distance. 
Our main result is a sample near-optimal and computationally efficient algorithm 
for this testing problem, and a nearly-matching (within logarithmic factors) sample complexity lower bound. 
Specifically, we show that the histogram testing problem has sample complexity 
$\widetilde \Theta (\sqrt{nk} / \eps + k / \eps^2 + \sqrt{n} / \eps^2)$.
\end{abstract}

\setcounter{page}{0}
\thispagestyle{empty}
\newpage

\section{Introduction} \label{sec:intro}

\subsection{Background and Motivation} \label{ssec:background}

A classical approach for the efficient exploration of massive datasets involves the construction 
of succinct data representations, see, e.g., the survey~\cite{CGHJ12}. 
One of the most useful and commonly used compact representations are {\em histograms}. 
For a dataset $S$, whose elements are from the universe $[n] \eqdef \{1, \ldots, n\}$, a $k$-histogram 
is a function that is piecewise constant over $k$ interval pieces. 
Histograms constitute the oldest and most popular synopsis structure 
in databases and have been extensively studied in the database community 
since their introduction in the 1980s~\cite{Kooi:1980}, 
see, e.g.,~\cite{GMP97, JPK+98, CMN98, TGIK02, GGI+02, GKS06, ILR12, ADHLS15, Canonne16}, 
for a partial list of references.  In both the statistics and computer science literatures, 
several methods have been proposed to estimate histogram distributions 
in a range of natural settings~\cite{Scott79, FreedmanD1981,Devroye2004, LN96, 
Klem09, ChanDSS13, ChanDSS14, CDSS14b, ADHLS15, ADLS17, DLS18}.

In this work, we study the algorithmic task of deciding whether a (potentially very large) dataset $S$ 
over the domain $[n]$ is a $k$-histogram (i.e., it has a succinct histogram representation with $k$ interval pieces) 
or is ``far'' from {\em any} $k$-histogram representation (in a well-defined technical sense). 
Our focus is on {\em sublinear time } algorithms~\cite{Rubinfeld:06survey}. 
Instead of reading the entire dataset $S$, which could be highly impractical, 
one can instead use randomness to sample a small subset of the dataset. 
Note that sampling a (uniformly) random element from $S$ is equivalent 
to drawing a sample from the underlying probability distribution $\p$ 
of relative empirical frequencies. This observation brings our algorithmic problem 
of ``histogram testing'' in the framework of distribution property testing 
(statistical hypothesis testing)~\cite{BFR+:00, Batu13}; 
see, e.g.,~\cite{Canonne15-survey} for a survey.

For an integer $1\leq k\leq n$, denote by $\mathcal{H}_k^n$ 
the set of $k$-histogram distributions over $\{1,2,\dots,n\}$, 
i.e., the set of all distributions $\p$ such that there exists a partition of $[n]$ 
into $k$ consecutive intervals (not necessarily of the same size) 
with $\p$ being uniform on each interval. 
Formally, we study the following task: 
Given access to i.i.d.\ samples
from an unknown distribution $\p$ on $[n]$ and a desired error tolerance $0<\eps < 1$, 
we want to correctly distinguish (with high probability) between the cases that $\p$ 
is a $k$-histogram versus $\eps$-far from any $k$-histogram, 
in total variation distance (or, equivalently, $\ell_1$-norm). 
It should be noted that the histogram testing problem studied here is not new. 
Prior work within the algorithms and database theory communities 
has investigated the complexity of the problem in the past ten years 
(see, e.g.,~\cite{ILR12, ADHLS15, Canonne16} and 
Section~\ref{ssec:prior} for a detailed summary of prior work). 
However, known algorithms for this task are highly suboptimal;
in particular, there is a polynomial gap between the best known upper and lower bounds 
on the sample complexity of the problem. At a high level, the difficulty of our histogram 
testing problem in the sublinear regime lies in the fact that the location and ``length'' 
of the $k$ intervals defining the histogram representation (if one exists) 
is a priori unknown to the algorithm.

We believe that the histogram testing problem is natural and interesting in its own right. 
Moreover, a sample-efficient algorithm for this testing task can be used 
as a key primitive in the context of {\em model selection}, 
where the goal is to identify the ``most succinct'' data representation.  
Indeed, various algorithms are known for learning $k$-histograms from samples 
whose sample complexities (and running times) scale proportionally 
to the succinctness parameter $k$ (and are completely independent 
of the domain size $n$)~\cite{CDSS14b, ADHLS15, ADLS17}. 
Combined with an efficient tester for the property of being a $k$-histogram 
(used to identify the smallest possible value of $k$ 
such that $\p$ is a $k$-histogram, e.g., via binary search), 
one can obtain a sketch of the underlying dataset. 
See Appendix~\ref{sec:msel} for a detailed description.

\subsection{Our Results} \label{ssec:results}
Our main contribution is a near-characterization of the sample complexity of the histogram testing problem. 
Specifically, we provide (1)~a sample near-optimal and computationally 
efficient testing algorithm for the problem, 
and (2)~a nearly-matching sample complexity lower bound (within logarithmic factors).
Specifically, we establish the following theorem:

\begin{theorem}[Main Result] \label{thm:main}
There exists a testing algorithm for the class $\mathcal{H}_k^n$ of $k$-histograms on $[n]$ 
with sample complexity $m = \wt O( \sqrt{nk}  / \eps + k/\eps^2 + \sqrt{n}/\eps^2)$ and running time $\poly(m)$.
Moreover, for any $k \in [n]$ {and $0<\eps<1$}, any testing algorithm for the class $\mathcal{H}_k^n$ 
requires at least $\wt \Omega( \sqrt{nk}  / \eps + k/\eps^2 + \sqrt{n} / \eps^2  )$ samples.
\end{theorem}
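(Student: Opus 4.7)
The plan is to prove the upper and lower bounds by separate arguments. For the upper bound, I would design an algorithm that consists of an adaptive \emph{partitioning} step followed by a \emph{testing} step. The partitioning step uses the samples to construct a partition $\mathcal{I}$ of $[n]$ into $\wt O(k)$ intervals such that if $\p \in \mathcal{H}_k^n$, then $\p$ is $\ell_1$-close to its flattening $\bar{\p}^{\mathcal{I}}$ (the distribution that is uniform on each $I \in \mathcal{I}$ with total mass $\p(I)$). A natural construction is tree-based recursive splitting: starting from $[n]$ as a single interval, recursively split any interval whose conditional empirical distribution is significantly non-uniform, stopping once there are $\wt O(k)$ pieces. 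The testing step then tests whether $\p$ is $\ell_1$-close to some distribution that is piecewise constant on $\mathcal{I}$, using a $\chi^2$-type statistic that exploits the piecewise structure and the fact that balanced $k$-histograms $\q$ satisfy $\|\q\|_2 = O(\sqrt{k/n})$. A careful analysis combining learning the per-bucket masses (cost $\wt O(k/\eps^2)$) with a tolerant $\ell_2$-identity-style test (cost $\wt O(\sqrt{nk}/\eps + \sqrt{n}/\eps^2)$) should yield the claimed total sample complexity; the $1/\eps$ (rather than $1/\eps^2$) factor in the $\sqrt{nk}/\eps$ term arises from testing against the flattening of $\p$ itself rather than against an external reference distribution.

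For the lower bound, three separate constructions yield the three summands. The $\wt \Omega(\sqrt{n}/\eps^2)$ term follows from Paninski's uniformity lower bound, since the uniform distribution lies in $\mathcal{H}_k^n$ for every $k \geq 1$. The $\wt \Omega(k/\eps^2)$ term follows from a reduction that distinguishes two distributions on $[k]$ which are $\ell_1$-far yet match in their low-order statistics, embedded into $[n]$ as $k$-histograms with blocks of size $n/k$. The main technical term $\wt \Omega(\sqrt{nk}/\eps)$ is obtained by defining a YES family of balanced $k$-histograms and a NO family that perturbs each of the $k$ blocks to introduce within-interval fluctuations at scale $\Theta(\eps/k)$, keeping every NO instance $\eps$-far from $\mathcal{H}_k^n$ while remaining statistically indistinguishable from YES up to $\wt O(\sqrt{nk}/\eps)$ samples. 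An Ingster-style $\chi^2$-method computation under Poissonization then gives the desired bound.

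The main obstacle is matching the $\sqrt{nk}/\eps$ term on both sides. Algorithmically, a direct $\ell_2$ identity test against a learned flattening would only yield $\sqrt{nk}/\eps^2$; improving to $\sqrt{nk}/\eps$ requires a refined statistic that better exploits the piecewise structure, for instance by testing $\p$ against its own flattening so that the sampling noise correlates and partially cancels. On the lower-bound side, the perturbations in the NO family must be simultaneously large enough to guarantee $\eps$-farness from $\mathcal{H}_k^n$ and subtle enough to produce an indistinguishable mixture under $\chi^2$; striking this balance, and carrying out the $\chi^2$ moment computation over random signs and piece alignments, is the crux of the lower-bound argument.
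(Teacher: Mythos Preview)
Your upper-bound sketch is in the right spirit (adaptive partition, then tolerant test), but your account of where the $\sqrt{nk}/\eps$ term comes from is off and skips the main difficulty. In the paper the final tolerant $\chi^2/\ell_1$ test costs only $O(\sqrt{n}/\eps^2)$; the $\sqrt{nk}/\eps$ term arises entirely in the \emph{learning/sieving} phase. The crux is not ``testing against the flattening of $\p$,'' but \emph{certifying} which intervals of the current partition are bad (i.e., carry large $\chi^2$ error) so that they can be subdivided. This requires simultaneous multiplicative-and-additive control of $\p(J)$ for \emph{all} subintervals $J$, which the paper obtains via a median-of-batches estimator with batch size $b\gg \sqrt{nK}/\eps + K/\eps^2$; the ratio $\p(J)/\hat\p(J)$ can blow up by a factor $n/b$, and absorbing this is precisely what forces $b\gg\sqrt{nK}/\eps$. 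Your ``recursively split any interval whose conditional empirical distribution is significantly non-uniform'' is exactly the step where~\cite{Canonne16} went wrong, and a direct implementation of that idea does not give the stated bound.

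The lower-bound proposal has two genuine gaps. First, your $k/\eps^2$ construction cannot work as stated: any distribution on $[k]$ embedded into $[n]$ via equal blocks of size $n/k$ \emph{is} a $k$-histogram, so both YES and NO instances lie in $\mathcal H_k^n$ and the reduction says nothing about histogram testing. Second, and more fundamentally, an Ingster-style $\chi^2$ computation with i.i.d.\ within-block sign perturbations does not produce $\sqrt{nk}/\eps$. If the NO instance perturbs each element by $\pm\delta$ around a flat baseline, then $\eps$-farness from \emph{every} $k$-histogram forces $\delta=\Theta(\eps/n)$, and the standard $\chi^2$ calculation yields only $\sqrt{n}/\eps^2$, with no $k$-dependence. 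The paper's construction is conceptually different and is drawn from the \emph{estimation} (not testing) literature: both YES and NO are generated from random variables $U,U'$ whose first $\Theta(\log n)$ moments match exactly, built from the roots of a Chebyshev-based polynomial, with $U$ concentrated at $1/n$ while $U'$ is split between $0$ and $2/n$. To match second moments, $U$ must occasionally take values as large as $\tilde\Theta(1/\sqrt{kn})$, and it is the constraint that at most $k$ such exceptional elements appear (so that the YES instance remains in $\mathcal H_k^n$) that couples $k$ and $n$ into $\sqrt{kn}$. The $k/\eps^2$ term then emerges from the \emph{same} construction in the regime $\eps\ll\sqrt{k/n}$, via a different TV bound on Poisson mixtures, rather than from a separate reduction.
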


\noindent (The $\tilde{O}(\cdot)$ and $\tilde{\Omega}(\cdot)$ notation hides polylogarithmic factors in the argument.) 
Theorem~\ref{thm:main} thus characterizes the complexity of the histogram testing problem 
within polylogarithmic factors. 

Note that there are three terms in the sample complexity; namely, $\sqrt{nk} / \eps$, $k / \eps^2$, 
and $\sqrt{n} / \eps^2$. 
The sample complexity of the problem is dominated by one of these three different terms, 
depending on the relative sizes of $n, k$ and $1/\eps$. 
An illustration is given in Figure~\ref{fig:sc}.

\begin{figure}[htp]
\includegraphics[width=10cm]{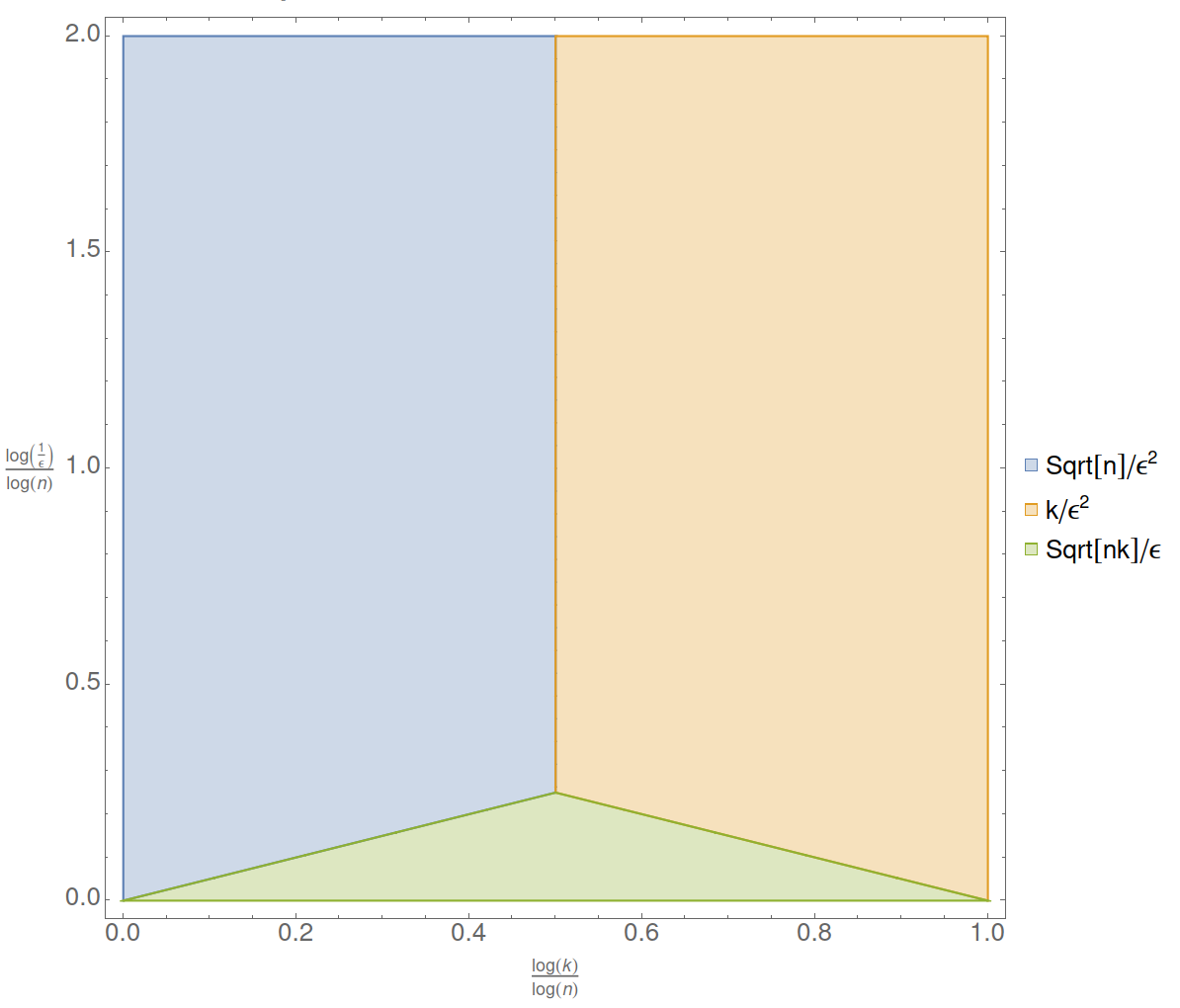}
\centering
\caption{\small{}The $x$-axis and $y$-axis represent the quantities 
$\log(k) / \log(n) $ and $\log(1/\eps)/ \log(n)$ respectively. 
Each point in the graph corresponds to a setting of the parameters $n, k, \eps$, 
and is colored based on the corresponding dominating term.}
\label{fig:sc}
\end{figure}

The best previous histogram testing algorithm had sample complexity 
$\wt O(\sqrt{kn}/\eps^3)$~\cite{CanonneDGR18}, 
while the best known lower bound was~$\wt \Omega (\sqrt{n} / \eps^2 + k /\eps)$~\cite{Canonne16}.
\footnote{As discussed in Section~\ref{ssec:prior}, while an upper bound of $\wt{O}(\sqrt{n} / \eps^2 + k /\eps^3)$ 
is claimed in~\cite{Canonne16}, the analysis of their algorithm\nopagebreak{} is flawed; 
and, indeed, our work shows that the upper bound stated in~\cite{Canonne16} \emph{cannot} hold, 
as it would contradict our\nopagebreak{} lower bound.\nopagebreak}
{We note that the previously best known upper and lower bounds exhibit a polynomial gap, 
even for constant values of~$\eps$ or $k$. 
For example, in the ``large-$k$'' regime, where $k = n^c$ for some constant $0< c<1$, 
there was a gap between $\wt O(n^{1/2+c/2})$ 
and $\tilde{\Omega} (n^{1/2})$ in the sample complexity. 
In this regime, Theorem~\ref{thm:main} gives the near-optimal bound of $\wt \Theta(n^{1/2+c/2})$.} 
{Similarly, in the ``high-accuracy'' regime, 
where $\eps=1/n^c$ for some constant $c>0$ (and, say, constant $k$), 
previous bounds imply that the sample complexity lies between 
$\wt O(n^{1/2+3c})$ and $\wt\Omega(n^{1/2+2c})$, 
while our result again gives the nearly-tight bound of $\wt \Theta(n^{1/2+2c})$.} 
These are only two specific examples: more generally, the previously known bounds 
are suboptimal by polynomial factors in $1/\eps$ when \smash{$\eps \geq \sqrt{k/n}$}; 
and {by polynomial factors} in all parameters $k,n,1/\eps$ 
when \smash{$\eps \leq \sqrt{k/n}$}. Theorem~\ref{thm:main} settles the sample complexity of the problem, 
up to logarithmic factors, for \emph{every} parameter setting.

At a technical level, our sample complexity lower bound construction conceptually 
differs from previous work in distribution testing, drawing instead from sophisticated 
techniques from the distribution \emph{estimation} literature. 
{Our upper bound leverages the ``Testing-via-Learning'' framework proposed in \cite{AcharyaDK15}. 
The main technical innovation enabling our algorithm is a sample and computationally efficient 
\emph{adaptive} algorithm which can simultaneously (1) learn an unknown histogram distribution 
\emph{with unknown interval structure}, and (2) identify a domain where the learned result is accurate.} 
We elaborate on these aspects next.

\subsection{Overview of Techniques} \label{ssec:techniques}

\paragraph{Sample Complexity Lower Bound.} 
We begin by discussing our lower bound techniques.
We follow the typical high-level approach in 
proving sample complexity lower bounds in this setting. 
Namely, we define two ensembles of distributions $\DY$ and $\DN$ 
such that, with high probability, the following conditions are satisfied: 
(1) a random distribution from $\DY$ is a $k$-histogram, 
(2) a random distribution from $\DN$ is $\eps$-far from any $k$-histogram, 
and (3) given samples of {appropriate size}, it is information-theoretically impossible 
to distinguish a random distribution drawn from $\DY$ from a random distribution drawn from $\DN$. 

{We start by describing our hard instances for the case that the accuracy parameter $\eps$ 
is a small universal constant.}
On the one hand, we define $\DY$ so that all $\p_i$'s are the same, 
except for a ``small'' number of domain elements, i.e., $c \cdot k$ for a small constant $c \in (0, 1)$. 
On the other hand, for a distribution $\p$ drawn from $\DN$, 
$\p_i$ will be randomly $0$ or roughly $2/n$, 
except for at most a constant fraction of the {elements}. 
It is not hard to see that, with high probability, a distribution drawn from $\DY$ (resp.\ $\DN$) 
will be a $k$-histogram (resp.\ far from being a $k$-histogram).

To ensure that the underlying distributions are indistinguishable using a small sample size, 
we want to {guarantee} that for all small values of $t$ 
the number of {elements} with exactly $t$ samples 
will be roughly the same for $\DY$ and $\DN$; 
this property rules out any test statistic relying on counting 
the number of $t$-way collisions among the samples. 
Following \cite{Valiant11,valiant2013estimating,Jiao15,Wu2016},
this is essentially equivalent to showing 
that distributions drawn from $\DY$ and $\DN$ \emph{match {their low-degree} moments}. 
In particular, for a random pair of distributions $\p$, $\p'$ drawn from $\DY$ and $\DN$ respectively,
we want that $\sum_i \p_i^t$ and $\sum_i {\p'}^t_i$ are roughly the same {for all small values $t$}.
We note that the non-exceptional {elements} of a distribution {$\p'$ drawn from} $\DN$ 
--- which {have probability mass either $0$ or roughly $2/n$} ---
will have second moment larger than the non-exceptional {elements} of a distribution {$\p$ drawn from} $\DY$ 
--- which {have probability mass roughly $1/n$} --- by approximately $1/n$. 
To counteract this {discrepancy}, 
the (fewer than $k$) exceptional {elements} in $\DY$ 
must have average mass at least $ 1/\sqrt{kn}$.
Fortunately, using techniques from~\cite{valiant2013estimating,wu2019chebyshev}, 
we are able to {construct} distributions that match $t = \Theta(\log n)$ moments
in which no individual bin has mass more than $\widetilde O(1/\sqrt{kn})$. 
Combining this construction with basic information-theoretic arguments 
gives us an $\widetilde \Omega(\sqrt{kn})$ sample complexity lower bound. 
We note that this lower bound is tight in the sense that 
with more than $\widetilde \Omega( \sqrt{kn})$ 
samples one can reliably identify the exceptional {elements}, 
as they will each have relatively large numbers of samples with high probability; 
this {allows us to distinguish $\DY$ from $\DN$ simply based 
on the subdistributions over these {elements}}.

{Given the aforementioned construction (for constant $\eps$)}, 
it is easy to obtain a sample lower bound of $\widetilde \Omega( \sqrt{kn}/\eps)$ 
by mixing our hard instances with {the} uniform distribution 
(with mixing weights $\eps$ and $1-\eps$ respectively). In fact, 
even if {the testing algorithm knows in advance} 
which samples come from the uniform part and which samples 
come from the original hard instance, distinguishing would still require $\widetilde \Omega(\sqrt{kn})$ 
samples from the original hard instance, 
and therefore $\widetilde \Omega(\sqrt{kn}/\eps)$ samples overall.
This sample size lower bound turns out to be tight for $\eps$ relatively large, 
as one can still reliably identify the exceptional bins with only $\widetilde \Omega(\sqrt{kn}/\eps)$ samples.
However, when $\eps$ becomes sufficiently small, identifying the exceptional bins becomes more challenging.
Indeed, if we take $m$ samples, we expect that an exceptional bin has roughly 
$m \eps/\sqrt{kn}$ more samples than a non-exceptional bin. On the other hand, 
a non-exceptional bin will have roughly $m/n$ samples 
with standard deviation $\sqrt{m/n}$. When $m/n \gg m \eps/\sqrt{kn}$  
(which happens in the regime $\eps \ll \sqrt{k/n}$), 
in order for the exceptional bins to be distinguishable, 
we would need that $m  \eps/\sqrt{kn} \gg \sqrt{m/n}$ or $m \gg k/\eps^2$ many samples. 
Using a careful information-theoretic argument, we formalize this intuition 
to show that $\widetilde \Omega(k / \eps^2) $ is indeed a sample lower bound in this regime.

\paragraph{Sample-Efficient Tester.} 
The starting point of our efficient tester is the ``Testing-via-Learning'' 
approach of \cite{AcharyaDK15}. {Very roughly speaking}, 
we first design a learning procedure which outputs a distribution $\hat \p$ 
that would be close to $\p$ in $\chi^2$ {divergence, assuming that} $\p$ was in fact a $k$-histogram. 
Then we use a $\chi^2$/$\ell_1$ tolerant tester, {in the spirit of the one introduced in~\cite{AcharyaDK15},}
to distinguish between the cases that $\p$ is close to $\hat \p$ in $\chi^2$-divergence
versus far from $\hat \p$ in $\ell_1$-distance. {We emphasize that the latter 
step is significantly more challenging than this rough outline suggests, 
as it is unclear how to perform the first step exactly. 
Instead, we design a specific learning algorithm with an implicit ``hybrid'' learning guarantee
(see Lemma~\ref{lem:sieve}),
which in turns requires us to considerably generalize and adapt the ``tolerant testing part'' to avoid spurious discrepancies (introduced in the imperfect learning stage) which may lead to false negatives.}

To implement the first step,
we follow the general ``learn-and-sieve'' idea suggested in \cite{Canonne16}, 
with important modifications to address the flaw in their approach and its analysis.
In particular, suppose that $\p$ is a $k$-histogram. Then, if we \emph{knew} 
the corresponding $k$ intervals (that make up the partition for the $k$-histogram), 
it suffices to learn the mass of $\p$ on each interval, 
and let $\hat \p$ be uniform on each interval (with the appropriate total mass). 
{Of course, a key source of difficulty arises from the fact that} 
we do not know the partition {a priori}. 
{To circumvent this issue,}
we divide $[n]$ into (roughly) $K= \Theta(k)$ 
intervals and try to detect if $\p$ is far in $\chi^2$ divergence 
from being uniform on any of these intervals. 
If an interval from our partition incurs large $\chi^2$ error 
{(we call such an interval {\em bad})}, 
we know that $\p$ must not be constant within this interval. 
Therefore, we proceed to subdivide these bad intervals into roughly 
equal parts, and recurse on the $\Theta(k)$ intervals in our new partition. 
{Assuming $\p$ is a $k$-histogram,} 
we subdivide at most $k$ intervals in each iteration, 
since there could be at most $k$ intervals from any {interval} partition of $[n]$ 
where $\p$ is not constant. 
Hence, in each iteration, we decrease the mass of the bad intervals by at least a constant factor.
We repeat the process for at most $O(\log(1/\eps))$ many iterations;
{after this many iterations,} 
the total mass of the bad intervals will become $O(\eps)$, 
and thus they may be safely ignored.

A significant difference between our method and the approach from \cite{Canonne16} 
lies in the method of \emph{sieving}. In \cite{Canonne16}, 
it was only said that the algorithm would filter out a subset of \emph{breakpoint} intervals 
based on the $\chi^2$ statistics (see, e.g.,~\cite{AcharyaDK15}) with the goal of reducing discrepancy; 
this is where the main gap in their analysis lies, and the particular (flawed) approach 
they suggested does not seem to be fixable~\cite{CanonnePCcom}. 
On the contrary, we characterize the exact set of intervals that need to (and can) be removed with 
a new definition of \emph{bad} intervals with respect to a given partition $\I$ of $[n]$ 
(see Definition \ref{def:bad-interval}). 
Based on that, our approach is to search for \emph{any} subintervals $J$ 
{(not necessarily an interval in $\I$)} on which 
the $\chi^2$ {divergence} between $\p$ and $\hat \p$ ---
{an approximation of $\p$ assuming $\p$ is uniform over intervals within the given partition} ---
is more than $\widetilde \Omega(\eps^2/k)$. 
{For an interval $I$ from the partition $\I$, we show the inclusion 
of such a ``bad subinterval'' $J \subseteq I$ then certifies the ``badness'' of $I$ itself.}
To find such a $J$, we need a technique for accurately approximating $\p(J)$ 
simultaneously for all intervals $J \subseteq [n]$, 
{in both absolute and relative error. We note that this is a notion of approximation 
much stronger than what classical tools from empirical process theory, 
such as the VC inequality (see, e.g.,~\cite{DL:01}), provide.} 
Notice that for a \emph{fixed} interval $J \subseteq [n]$, 
taking the empirical distribution over $b$ samples gives an estimate $\q$ of $\p$ 
such that $\abs{\q(J)- \p(J)} < \sqrt{\p(J)/b}$   
with constant probability.
By taking $\Theta( \log(n) )$ batches of samples (each containing $b$ i.i.d.\ samples from $\p$), 
and computing the median value of all of the $\q(J)$'s, with high probability for each $J$, 
we then obtain an estimate $\pmap(J)$~\footnote{Notice that $\pmap$ is neither a distribution nor a measure, but just a map from intervals to positive real values.} for which the above condition holds.
Using this subroutine, 
as long as $b$ is at least $ \Omega( k/\eps^2)$, 
we can ensure that $|\pmap(J)-\p(J)|^2/\p(J) \ll \eps^2/k$, 
{and we can then safely use our estimate $\pmap(J)$ as a proxy for $\p(J)$ 
for the detection of those ``bad subintervals'' 
for which $|\p(J)-\hat \p(J)|^2 / \hat \p(J)$ is large, 
which in turns certify the bad intervals from a given partition. }
This suffices \emph{unless} $\p(J)$ is substantially larger than our estimate $\hat \p(J)$.

Unfortunately, 
the ratio between 
$|\pmap(J)-\p(J)|^2/\hat \p(J)$ and 
$|\pmap(J)-\p(J)|^2/\p(J)$ (in particular $\p(J) / \hat \p(J)$) can be unbounded 
when $\p(J)$ is smaller than $1/b$.
In such a case, in a collection of $b$ samples from $\p$, 
we are likely to see no samples {in} $J$, and thus our empirical estimate $\hat \p(J)$ will be $0$. 
We can fix {this issue (i.e., the case where $\hat \p(J)$ is actually $0$)}
by mixing both $\p$ and $\hat \p$ with the uniform distribution, 
thus allowing us to assume that $\hat \p(J) \geq |J|/2n \geq 1/(2n)$.
Yet, this still leaves a potential gap of roughly $n/b$ 
between the ratio of $\p(J)$ and $\hat \p(J)$.
Fortunately, if we {select} $b \gg \sqrt{nk}/\eps$, 
we will have that $|\pmap(J)- \p(J)|^2/\p(J) \ll \eps/\sqrt{nk}$, 
and even accounting for losing a factor of $b/n$, 
we will still have that $|\pmap(J)- \p(J)|^2/ \p(J) \ll \eps^2/k$. 
This implies that we will successfully detect any bad intervals and achieve our learning guarantees.

\subsection{Prior and Related Work} \label{ssec:prior}

The field of \emph{distribution property testing}~\cite{BFR+:00} 
has been extensively investigated in the past
couple of decades, see~\cite{Rub12, Gol:17, Canonne15-survey} 
for two recent surveys and a book on the topic.
A large body of the literature has focused on characterizing the sample size needed to test properties
of arbitrary discrete distributions of a given support size. This regime is fairly well understood 
for many properties of interest, and in particular for symmetric properties 
(i.e., invariant under permutation of the domain). 
For a range of such properties, there exist sample-optimal 
testers~\cite{Paninski:08, CDVV14, DKN:15, AcharyaDK15, 
DiakonikolasK16, DiakonikolasGPP16, CDS17, Gol:17, DGPP17, CDKS18}.
We note that the property of being a histogram is not symmetric.

Motivated by the question of building provably good succinct representations 
of a dataset from only a small subsample of the data,~\cite{ILR12} 
first introduced histogram testing as a preliminary, 
ultra-efficient decision subroutine to find the best parameter $k$ for the number of bins. 
They provided an algorithm for this task which required $\wt O(\sqrt{kn}/\eps^5)$ samples 
from the dataset, a sample complexity which beats the na\"ive approach 
(reading and processing the whole dataset) for small values of $k$ 
and relatively large values of the accuracy parameter $\eps$. 
Subsequent work~\cite{CanonneDGR18} reduced the dependence on $\eps$ 
from quintic to cubic, giving an algorithm with sample complexity $\wt O(\sqrt{kn}/\eps^3)$. 
This bound was, however, still quite far from the ``trivial'' lower bound of $\Omega(\sqrt{n}/\eps^2)$, 
which follows from a reduction to uniformity testing (i.e., the case $k=1$)~\cite{Paninski08}.

Prior to the current work, an $\wt \Omega ( \sqrt{n} / \eps^2 + k /\eps )$ sample complexity 
lower bound was obtained in~\cite{Canonne16}. The latter work also claimed a testing 
algorithm with sample complexity $ \wt O (  \sqrt{n} / \eps^2 + k / \eps^3 )$.
Unfortunately, as pointed out in~\cite{CanonnePCcom}, 
the algorithm proposed in~\cite{Canonne16} is incorrect due to a technical flaw in the analysis. 
This leaves the sample complexity of the problem open for even constant $\eps$.
The lower bound of~\cite{Canonne16} is valid
and relies on a reduction of histogram testing to the well-studied problem 
of support size estimation. Consequently, it provably cannot be improved to provide either
(i) a quadratic dependence on $\eps$, i.e., $\wt \Omega(k/\eps^2)$, 
or (ii) coupling between the two domain parameters $k,n$, i.e., $\wt \Omega(\sqrt{nk}/\eps)$. 
Our work resolves the complexity of histogram testing, 
for the entire parameter range, within logarithmic factors.

Finally, we note that a number of works have obtained algorithms 
and lower bounds for related, yet significantly different, testing problems. 
Specifically,~\cite{DiakonikolasK16} gave a sample-optimal testing algorithm 
for the special case of our problem where the $k$ intervals are known {\em a priori}. 
This special case turns out to be significantly simpler.
Moreover, a number of works~\cite{DKN:15, DKN:15:FOCS, DKN17} 
have obtained identity and equivalence testers {\em under the assumption} 
that the input distributions are $k$-histograms.

\subsection{Preliminaries} \label{ssec:prelims}
We denote by $\totalvardist{\p}{\q}$ the total variation (TV) distance 
between probability distributions $\p,\q$ over $[n]\eqdef \{1,2,\dots,n\}$, defined as
\[
	\totalvardist{\p}{\q} \eqdef \sup_{S\subseteq [n]} (\p(S)-\q(S)) = \frac{1}{2} \sum_{i=1}^n |\p(i)-\q(i)|,
\]
where $\p(S) \eqdef \sum_{i\in S}\p(i)$. 
We will make essential use of the $\chi^2$-divergence of $\p$ with respect to $\q$, defined as
\[
	\dchi{}{\p}{\q} \eqdef \sum_{i=1}^n \frac{\lp( \p_i - \q_i \rp)^2}{\q_i}.
\]
We will also require generalizations of these definitions on restrictions of the domain. 
In particular, given $S \subseteq [n]$, we use the notation
$\totalvardistrestr{S}{\p}{\q} \eqdef (1/2) \sum_{i \in S} |\p(i)-\q(i)|$
and $\dchi{S}{\p}{\q} \eqdef \sum_{i \in S} \lp( \p_i - \q_i \rp)^2/\q_i$.
We note that for any $S \subseteq [n]$, it holds that $ \TV^{S}(\p, \q)^2 \leq \frac{1}{4} \dchi{S}{\p}{\q}$.

The asymptotic notation $\tilde{O}$ (resp. $\tilde{\Omega}$) suppresses logarithmic factors in its argument, 
i.e., $\tilde{O}(f(n)) = O(f(n)\log^c f(n))$ and 
$\tilde{\Omega}(f(n)) = \Omega(f(n)/\log^c f(n))$, where $c>0$ is a universal constant. 
The notations $\ll$ and $\gg$ intuitively mean ``much less than'' and ``much greater than'' respectively. 
Formally, we write $f(n) \ll g(n)$ to denote that $f(n) < c \cdot g(n)$, for some universal constant 
$c>0$. 

\section{Main Algorithmic Result: Near-Optimal Histogram Tester} \label{sec:ub}
\paragraph{A preliminary simplification.} 
Without loss of generality, we will assume that $\p(i) \geq \frac{1}{2n}$ for every $i\in[n]$. 
Indeed, this can be ensured by mixing the unknown distribution 
with the uniform distribution $\mathbf{u}_n$ on $[n]$ beforehand, 
i.e., $\p' \eqdef \frac{1}{2}(\p+\mathbf{u}_n)$ 
(see Fact~\ref{fact:uniform-mix} in Appendix for how to sample from $\p'$ efficiently).
It is easy to see that $\p'$ remains a histogram after mixing: $\p'\in \mathcal H_k^n$ if $\p \in \mathcal H_k^n$, and $\p'$ is at least $(\eps/2)$-far away from every histogram if $\p$ is $\eps$-far from every histogram.

\paragraph{Testing via Learning.} The high-level approach is to leverage 
the \emph{Testing-via-Learning} framework proposed in \cite{AcharyaDK15}. 
In particular, suppose we have a learning algorithm capable of constructing a hypothesis 
$\hat \p$ that is close to $\p$ in $\chi^2$ divergence when $\p \in \mathcal H_n^k$. 
Then, (1)~if $\p\in\mathcal{H}_k^n$, we will have that $\p$ and $\hat{\p}$ 
are close \emph{and} (as a consequence of this) that $\hat{\p}$ is close to being a $k$-histogram. 
Yet, (2)~if $\p$ is far from being a $k$-histogram, then by the triangle inequality 
we must have either that $\hat{\p}$ is far from being a $k$-histogram, 
or that $\p$ and $\hat{\p}$ are far from each other in $\ell_1$ distance. 
We can use dynamic programming to efficiently check the explicit description 
is indeed close to a $k$-histogram in $\ell_1$ distance (see Lemma 4.11 of \cite{CanonneDGR18}). 
To verify that $\p$ and $\hat{\p}$ are close, we will use a result of~\cite{AcharyaDK15} on tolerant identity testing.
In particular, given an explicit description $\hat \p$, the tester takes samples 
from the unknown distribution $\p$ and decides whether $\p$ and $\hat \p$ 
are closed in $\chi^2$ divergence or far in $\ell_1$ distance.
We remark that $\hat \p$ can be relaxed to be a positive measure.

\begin{lemma}[Adapted from Lemmas 2 and 3 \cite{AcharyaDK15}] \label{prop:chi-stats}
Let $\p$ and $\hat \p$ be a distribution and a positive measure defined on $[n]$ respectively.
Fix $\eps \in (0, 1)$ and  let $\A = \{ i \in [n]: \hat{\p}(i) \geq \eps/(50n) \}$.
There exists a tester \algname{Tolerant-Identity-Test}, which takes $\Poi(m)$ i.i.d.\ samples 
from $\p$ and outputs \textsf{Accept} if $ \dchi{ \mathcal A }{\p}{\hat \p} \leq \eps^2/500 $ 
and \textsf{Reject} if $\TV^{\A}\lp( \p, \hat \p \rp) \geq \eps$ with constant probability.
\end{lemma}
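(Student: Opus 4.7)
The plan is to follow the Acharya--Daskalakis--Kamath template: use their Poissonized bias-corrected $\chi^2$-type statistic, but restricted to the set $\A$ on which $\hat{\p}$ is not too small. Draw $\Poi(m)$ samples and let $N_i$ denote the number of times $i \in [n]$ appears, so that by Poissonization the $N_i$'s are mutually independent with $N_i \sim \Poi(m \p(i))$. The tester computes
\[
Z \eqdef \sum_{i \in \A} \frac{(N_i - m\hat{\p}(i))^2 - N_i}{m\hat{\p}(i)}
\]
and accepts iff $Z < T$ for an appropriately chosen threshold $T = \Theta(m\eps^2)$.

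First, a direct calculation using $\E[N_i] = \var(N_i) = m\p(i)$ shows that the $-N_i$ debiasing term cancels the Poisson variance contribution inside the square, giving
\[
\E[Z] = \sum_{i \in \A} \frac{m(\p(i) - \hat{\p}(i))^2}{\hat{\p}(i)} = m \cdot \dchi{\A}{\p}{\hat{\p}}.
\]
Next, by Poisson independence $\var(Z) = \sum_{i \in \A} \var(Z_i)$, and a standard computation using the fourth central moment of a Poisson bounds each $\var(Z_i)$ by terms of the form $\p(i)^2/\hat{\p}(i)^2$ and $m \hat{\p}(i)$. The restriction to $\A$ -- where $\hat{\p}(i) \geq \eps/(50n)$ -- is exactly what keeps these denominators bounded away from zero; without this cutoff the variance contributions from tiny-mass coordinates would blow up.

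To handle the rejection condition, I would convert the $\TV^{\A}$ gap into a $\chi^2$ gap via Cauchy--Schwarz on $\A$:
\[
\bigl(2\TV^{\A}(\p,\hat{\p})\bigr)^2 = \Bigl(\sum_{i \in \A} \frac{|\p(i) - \hat{\p}(i)|}{\sqrt{\hat{\p}(i)}}\sqrt{\hat{\p}(i)}\Bigr)^2 \leq \hat{\p}(\A) \cdot \dchi{\A}{\p}{\hat{\p}},
\]
which, combined with the bound $\hat{\p}(\A) = O(1)$, yields $\dchi{\A}{\p}{\hat{\p}} = \Omega(\eps^2)$ whenever $\TV^{\A}(\p,\hat{\p}) \geq \eps$. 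Consequently, $\E[Z] \leq m\eps^2/500$ in the accept case while $\E[Z] = \Omega(m\eps^2)$ in the reject case, and a properly chosen $T$ places the threshold strictly between the two. Chebyshev's inequality then gives constant success probability provided $\sqrt{\var(Z)}$ is smaller than the gap $\Omega(m\eps^2)$, which is what determines the admissible range of $m$.

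The main obstacle is the variance estimate: unlike the standard ADK setting, here $\hat{\p}$ is a positive measure rather than a probability distribution, so each step must avoid implicitly using $\sum_i \hat{\p}(i) = 1$, and the restriction to $\A$ has to be threaded consistently through the expectation, the variance, and the Cauchy--Schwarz step. The $\hat{\p}(i) \geq \eps/(50n)$ bound on $\A$ is what makes the argument quantitatively go through; apart from this adaptation, the proof follows the template of~\cite{AcharyaDK15}.
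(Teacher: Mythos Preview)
The paper does not prove this lemma; it is stated as an adaptation of Lemmas~2 and~3 of~\cite{AcharyaDK15} and used as a black box. Your sketch is precisely the ADK template (Poissonized, bias-corrected $\chi^2$ statistic restricted to $\A$, mean computation, variance bound, Chebyshev), and that is the intended approach.

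One point deserves care. Your Cauchy--Schwarz step yields $(2\TV^{\A}(\p,\hat\p))^2 \leq \hat{\p}(\A)\cdot \dchi{\A}{\p}{\hat{\p}}$, after which you assert $\hat{\p}(\A)=O(1)$. The lemma as stated only assumes $\hat{\p}$ is a positive measure on $[n]$, with no bound on its total mass, so this is not a consequence of the hypotheses. In the paper's actual use the measure $\hat{\p}$ is produced by \algname{Empirical-Learning}, and the guarantee $\pmap(I)/\p(I)\leq 3$ from Lemma~\ref{lem:interval-mass-estimate} forces $\hat{\p}([n])\leq 3$, so the bound holds in context; but a self-contained proof of the lemma \emph{as written} needs either an explicit hypothesis $\hat{\p}([n])=O(1)$ or an argument that absorbs the factor $\hat{\p}(\A)$ into the threshold and the admissible range of $m$. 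The same implicit boundedness is also what keeps the variance terms (which involve sums of $\p_i^2/\hat\p_i^2$ and related quantities over $\A$) under control in the ADK analysis, so you should thread that assumption through the variance step as well.
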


\paragraph{Outline for Learning.} If $\p \in \mathcal H_k^n$ and 
we know the partition of $\p$ in advance, one can learn $\p$ up to error 
$\eps^2$ in $\chi^2$ divergence with $\Theta(k / \eps^2)$ samples 
(following the analysis of Laplace estimator from \cite{KamathOPS15}).
Without the partition information, we can nonetheless achieve a weaker guarantee. 
That is, we can output a fully specified measure $\hat \p$ on $[n]$, 
together with a subset of the domain, $\G \subseteq [n]$, 
such that $\dchi{\G}{\p}{\hat \p}$ is small. 
In particular, we can achieve the guarantee in three steps.
 \begin{enumerate}
  \item[(i)] \label{step:1} Equally divide the domain $[n]$ into $K \gg k$ many intervals (Lemma~\ref{lem:sub-divide}).
  \item[(ii)] Output a succinct measure $\hat{\p} $ that is constant on each interval specified by Step (i) (Section~\ref{ssec:learn}).
  \item[(iii)] Identify the intervals $\Bad$ where $\dchi{\Bad}{\p}{\hat \p}$ is large (Section~\ref{ssec:sieve}) 
  and take $\G = [n] \backslash \B$.
 \end{enumerate}
The fact that we only have $\p$ and $\hat \p$ close in $\chi^2$ divergence on a subdomain $\G$ 
is a reasonable compromise, as long as $\p(\B), \hat\p(\B) \ll \eps$: 
if $\p$ is $\eps$-far away from $\hat \p$ in $\ell_1$ distance on $[n]$, 
$\p$ is at least $(\eps -\p(\B) - \hat \p(\B))$-far away from $\hat \p$ on $[n] \backslash \B$. 
Otherwise, we may take more samples from $\p$ restricted to $\B$ 
and subdivide the problematic intervals identified in Step (iii). 
Repeating the above steps iteratively then brings us to the case $\p(\B) \ll \eps$.

\paragraph{Equitable Partition.} \label{ssec:partition}
The first step is to divide the domain into $\Theta(k)$ many intervals 
over which the masses of $\p$ are approximately equal. 
As shown in \cite{AcharyaDK15}, this can be done with $\widetilde \Theta(k)$ many samples 
through a routine we denote by \algname{Approx-Divide}. 
We also need a routine for subdividing a set of disjoint intervals into even lighter subintervals. 
Nonetheless, one can reduce the subdividing task to domain partitioning 
by running \algname{Approx-Divide} on the subdistribution restricted to the set of disjoint intervals. 
For clarity of exposition, the proofs for this refinement of the subroutine from~\cite{AcharyaDK15} 
are provided in Appendix~\ref{appendix:partition}.

\begin{lemma} \label{lem:sub-divide}
There exists an algorithm \algname{Approx-Sub-Divide} that, 
given parameters $\delta \in(0,1]$ and integer $B > 1$, 
as well as a set of disjoint intervals $\mathcal I = \{ I_1, I_2, \cdots, I_q  \}$, 
given sample access to $\p$ on $[n]$, outputs a list of partitions $\mathcal S_{1},\dots,\mathcal S_{q}$, 
where $\mathcal S_{i}$ is the partition of the interval $I_i \in \mathcal I$, 
such that the following holds with probability at least $1 - \delta$.
\begin{enumerate}
\item[(i)] The algorithm uses $O\lp( B / \p(\I) \cdot \log \big(B / \delta \big) \rp)$ samples.
\item[(ii)] The output contains at most $(8B + q)$ intervals in total.
\item[(iii)] Every non-singleton interval $S \in \bigcup_{j=1}^q \mathcal S_j$ satisfies $\p(S) \leq \p(\mathcal I) \cdot 16/B$. 
\end{enumerate}
\end{lemma}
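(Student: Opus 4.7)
The plan is to reduce \algname{Approx-Sub-Divide} to the existing \algname{Approx-Divide} routine from \cite{AcharyaDK15} by conditioning $\p$ on $\mathcal{I} \eqdef \bigcup_{j=1}^q I_j$ and treating this restricted domain as a single linearly-ordered universe.

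First, I would use rejection sampling to simulate draws from the conditional distribution $\p_{\mid \mathcal{I}}$: draw i.i.d.\ samples from $\p$ and retain only those falling in $\mathcal{I}$. Since \algname{Approx-Divide} requires $O(B \log(B/\delta))$ samples from its input distribution to produce a partition into at most $8B$ pieces with the desired mass guarantee, a Chernoff bound shows that drawing $O\lp( B/\p(\mathcal{I}) \cdot \log(B/\delta) \rp)$ samples from $\p$ suffices to produce that many conditional samples, except with probability at most $\delta/2$. This yields item~(i).

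Second, I would concatenate $I_1,\dots,I_q$ into a single ordered universe (preserving the relative order of elements within each $I_j$ and placing $I_j$ before $I_{j+1}$), and run \algname{Approx-Divide} with parameter $B$ on this universe using the conditional samples. By the guarantee of \algname{Approx-Divide}, this produces at most $8B$ pieces, each non-singleton piece having conditional mass $\leq 16/B$, which corresponds to an unconditional mass bound of $16 \p(\mathcal{I})/B$, matching~(iii). However, a piece produced in this concatenated view may straddle several of the original intervals $I_j$, so I would then split each such piece at every $I_j$-boundary it crosses; equivalently, output the common refinement of $\mathcal{I}$ with the partition of \algname{Approx-Divide}. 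Since there are only $q-1$ internal boundaries of $\mathcal{I}$, this introduces at most $q$ additional cuts, giving the bound $8B + q$ in~(ii), and splitting only decreases per-piece mass, so~(iii) is preserved.

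The main obstacle will be the boundary bookkeeping in the last step: we must verify that the common refinement truly yields intervals that are subintervals of the $I_j$'s (so that the output genuinely partitions each $I_j$), and that the non-singleton mass bound survives the refinement. Both are handled by the observation that each newly introduced cut from an $I_j$-boundary falls inside a \algname{Approx-Divide}-piece whose total mass is already controlled, so its two fragments each satisfy the $16 \p(\mathcal{I})/B$ bound. A secondary subtlety is that \algname{Approx-Divide} may isolate ``heavy'' elements as singletons; these are carried over as singleton intervals in the output $\mathcal{S}_j$ containing them, and are correctly excluded from the mass guarantee~(iii) by the ``non-singleton'' clause.
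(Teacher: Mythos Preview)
Your proposal is correct and follows essentially the same approach as the paper: concatenate the intervals into a single ordered domain, simulate the conditional distribution via rejection sampling, run \algname{Approx-Divide}, and then take the common refinement with the original interval boundaries. The paper's write-up phrases the refinement step as an explicit case analysis (handling intervals $J$ that span from some $I_j$ to $I_k$ by splitting off the two end-pieces and leaving each intermediate $I_t$ as its own trivial partition), but this is exactly the common refinement you describe, and the counting and mass arguments match yours.
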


\subsection{Simultaneously Estimating Mass of Intervals} \label{ssec:learn}
In this section, we first introduce \algname{Interval-Mass-Estimate}, 
a subroutine that can accurately approximate the mass of $\p(J)$ 
for all intervals $J \subseteq [n]$ simultaneously, 
and then show how we can use it to learn $\p$ (assuming $\p \in \mathcal H_n^k$).

\algname{Interval-Mass-Estimate} first divides the number of samples drawn 
into $\Theta( \log (n/ \delta) )$ batches. For an interval $I$, 
we compute the estimate (number of samples falling in $I$ divided by the batch size) 
for each batch separately and compute the median over the statistics. 
This is often referred as the ``Median Trick'' 
and is crucial in achieving the learning guarantees with high probability. %

\begin{lemma} \label{lem:interval-mass-estimate}
Let be $\p$ be supported on $[n]$ such that $\p(i) \geq 1/(2n)$. 
Fix $b \in \mathbb Z^{+}$ and $\delta\in(0,1]$.
The algorithm \textup{\algname{Interval-Mass-Estimate}} 
takes $6 b \log (n/\delta)$ i.i.d.\ samples from $\p$ and 
outputs $\pmap$, a map from sub-intervals of $[n]$ to real values, 
such that, with probability at least $1 - \delta$, for every sub-interval $I \subseteq [n]$ 
it holds that ${ \p(I) }/{  \pmap(I) } \leq   \max (2\, , 8  n/b)$, ${\pmap(I) }/{ \p(I) } \leq 3$ 
and $\abs{ \pmap(I) - \p(I)} \leq \sqrt{ \p(I) / b }$.
\end{lemma}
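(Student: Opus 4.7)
}
Fix an arbitrary sub-interval $I \subseteq [n]$. Let $T = 6\log(n/\delta)$ be the number of batches; the $t$-th batch produces the empirical estimate $\q_t(I) \eqdef S_t/b$, where $S_t \sim \mathrm{Bin}(b,\p(I))$ counts the samples from that batch which fall in $I$, and the final output $\pmap(I)$ is the median of $\{\q_1(I),\dots,\q_T(I)\}$. The plan is to show that each of the three claimed bounds holds for a single batch with probability strictly greater than $1/2$, boost each per-batch guarantee to probability $1-\delta/(6n^2)$ via the median-of-means trick, and finally take a union bound over the three bounds and the $O(n^2)$ sub-intervals of $[n]$.

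For the per-batch analysis: (i) the additive bound $|\q_t(I)-\p(I)|\leq \sqrt{\p(I)/b}$ follows from Chebyshev's inequality applied to $\q_t(I)$, whose variance is at most $\p(I)/b$ (any loss in the constant can be absorbed by a factor-of-constant increase in $b$); (ii) the upper ratio $\q_t(I) \leq 3\p(I)$ is immediate from Markov's inequality, since $\shortexpect[\q_t(I)] = \p(I)$; (iii) for the lower ratio $\p(I)/\q_t(I) \leq \max(2,8n/b)$, I would split into two regimes depending on the magnitude of $b\p(I)$. When $b\p(I)$ exceeds a sufficiently large absolute constant, a one-sided multiplicative Chernoff bound yields $\q_t(I) \geq \p(I)/2$ with probability $\geq 5/6$; when $b\p(I)$ is smaller, the preliminary simplification $\p(i) \geq 1/(2n)$ forces $\p(I) \geq |I|/(2n)$, so the looser threshold $b\p(I)/(8n)$ lies below the natural granularity $1/b$ of a single batch's estimate, and observing at least one sample in $I$ already certifies $\q_t(I) \geq 1/b \geq b\p(I)/(8n)$.

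For the median trick: for each of the three bounds, the indicators $\mathbf{1}\{\q_t(I) \text{ satisfies the bound}\}$ are i.i.d.\ Bernoulli with parameter $\geq 2/3$, so a standard Chernoff bound on their sum shows that a strict majority succeed, and hence the median $\pmap(I)$ inherits the bound, with probability $1-e^{-\Omega(T)} \leq \delta/(6n^2)$ once the constant hidden in $T=\Theta(\log(n/\delta))$ is large enough. Taking a union bound over the three bounds and all $\binom{n+1}{2} = O(n^2)$ sub-intervals of $[n]$ then completes the argument with total failure probability at most $\delta$. The main obstacle is item (iii) in the regime where $b\p(I)$ is small, since standard multiplicative concentration is vacuous there; the specific calibration of the threshold $\max(2,8n/b)$ is tailored precisely to this regime, relating the discreteness of the batch estimator to the $1/(2n)$ pointwise floor guaranteed by the uniform-mixing preprocessing.
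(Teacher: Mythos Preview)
Your proposal has a genuine gap in the treatment of the lower-ratio bound $\p(I)/\pmap(I) \leq \max(2,8n/b)$, stemming from a misreading of the algorithm. You take $\pmap(I)$ to be the raw median of the per-batch empirical frequencies, but the paper's algorithm sets $\pmap(I) \eqdef \max\bigl( X_I/b,\ |I|/(2n) \bigr)$, i.e., it \emph{floors} the median estimate at $|I|/(2n)$. This floor is not cosmetic: without it, your argument for item~(iii) in the small-$b\p(I)$ regime breaks down. If $b\p(I)$ is a small constant (say $0.1$), then each batch sees zero samples in $I$ with probability roughly $e^{-b\p(I)} \approx 0.9$, so the median of the $\q_t(I)$'s is $0$ with overwhelming probability, and the ratio $\p(I)/\pmap(I)$ is infinite. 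Your claim that ``observing at least one sample in $I$ already certifies $\q_t(I)\geq 1/b$'' is true but useless here, since that event does not occur in a majority of batches. (Also, the sentence invoking $\p(I)\geq |I|/(2n)$ to argue that the threshold $b\p(I)/(8n)$ lies below $1/b$ is pointing in the wrong direction: a \emph{lower} bound on $\p(I)$ makes that threshold larger, not smaller.)

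The paper's proof handles this regime precisely via the floor: since $\pmap(I) \geq |I|/(2n) \geq 1/(2n)$ always, when $\p(I)\leq 4/b$ one immediately gets $\p(I)/\pmap(I) \leq 8n/b$; and when $\p(I) > 4/b$ the additive deviation bound already gives a multiplicative one. The floor is also harmless for the other two inequalities because $\p(I)\geq |I|/(2n)$ means the max can only move $\pmap(I)$ closer to $\p(I)$. Once you incorporate the $\max(\cdot,|I|/(2n))$ into your description of the estimator, your overall plan (Chebyshev/Markov per batch, median trick, union bound over $O(n^2)$ intervals) lines up with the paper's proof.
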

\begin{proof}[Proof of Lemma~\ref{lem:interval-mass-estimate}]
We provide below the pseudocode of the algorithm, before analyzing its guarantees.
\begin{algorithm}[H]
\caption{Interval-Mass-Estimate} \label{alg:interval-mass-estimate}
    \begin{algorithmic}[1]
    \Require $m$ i.i.d.\ samples from distribution $\p$ on $[n]$; failure probability $\delta$.
    \State Let $T \eqdef  6 \log (n / \delta)$.
    \State Split the samples into $T$ batches $S^{(1)}, S^{(2)}, \cdots,  S^{(T)}$, each of $b\eqdef m/T$ samples.
    \State Initialize a Map $\pmap\colon \mathcal I \mapsto \R$ which maps sub-intervals of $[n]$ to real values.
    \For{all intervals $I \subseteq [n]$}

    \State Let $X_j^{(i)}$ be the number of samples falling in interval $I_j$ from samples $S^{(i)}$.
    \State Compute the median $ X_j \eqdef \Median_{1\leq i\leq T}\big( X_j^{(i)} \big) $. Accordingly, set
    \[
    \pmap ( I ) \eqdef \max \!\lp( \frac{X_j}{b}, \frac{ \abs{I} }{2n} \rp).
    \]
    \EndFor
    \State Output the map $\pmap$.
    \end{algorithmic}
\end{algorithm}
\noindent The following claim about the ``Median Trick'' will be useful.
\begin{claim} \label{clm:interval-median}
Let $I$ be an interval (or subset) over $[n]$ and $\p$ a distribution over $[n]$. 
For $\delta \in (0,1)$, assume one takes $T \eqdef 3 \log (1 / \delta)$ batches 
of i.i.d.\ samples from $\p$ where each batch is of size $b \in \mathbb Z^{+}$. 
Denote by $X_I^{(i)}$ the number of samples falling in $I$ from the $i$-th batch, 
for $1\leq i\leq T$. Then, the median $X_I \eqdef \Median_i(  X_I^{(i)})$ satisfies
\begin{align*}
&\abs{ \frac{ X_I }{b} - \p(I) } \leq 3 \sqrt{ \frac{\p(I)}{b}  } \quad \text{ and } \quad
\frac{ X_I }{b}  \leq 3 \p(I) \,,
\end{align*}
with probability at least $1 - \delta$.
\end{claim}
\begin{proof}
For $X_I^{(1)}$, it follows from standard application of Chernoff Bound 
and Markov's Inequality that
\begin{align}
& \abs{ \frac{ X_I }{b} - \p(I) } \leq 3 \sqrt{ \frac{\p(I)}{b}  }  \label{eq:cheby-cond}\\
& \frac{ X_I }{b}  \leq 3 \p(I) \label{eq:markov-cond}
\end{align}
each individually with probability at least $2/3$.
Let $a_i$ be the indicator variable such that Equation~\eqref{eq:cheby-cond} 
holds for $X_I^{(i)}$ for $1 \leq i \leq T$. If more than half of $a_i$s are true, 
then it follows the median $X_I$ will also satisfy Equation~\eqref{eq:cheby-cond}. 
Since $a_i$s are independent, this happens with probability precisely $Y > T/2$, 
where $Y \sim Bin(T, 2/3)$. By the multiplicative Chernoff's Bound, 
the failure probability is bounded above by 
\[
\Pr \lp(Y \leq \frac{1}{2} T  \rp)
= \Pr \lp (Y \leq (1 - \frac{1}{4}) \cdot \frac{2}{3} \cdot T    \rp)
\leq 
\lp( \frac{ \exp( -1/4 ) }{ (3/4)^{3/4} }  \rp)^{ 2/3 \cdot T } \,.
\]
Note that we indeed have $\lp( \frac{ \exp( -1/4 ) }{ (3/4)^{3/4} }  \rp)^{ 2/3 \cdot T } \leq \delta/2$ 
given $T \geq 3 \log(1 / \delta)$. A similar argument holds for Equation~\eqref{eq:markov-cond} 
and the claim follows by applying the union bound.
\end{proof}

With this in hand, we are ready to establish Lemma~\ref{lem:interval-mass-estimate}. 
Let $I \subseteq [n]$ be an arbitrary interval and $X_I$ 
be the median of the numbers of samples falling in from each batch. 
By Claim~\ref{clm:interval-median}, we have that
\begin{align}
&\abs{\frac{X_I}{b} - \p(I)} \leq \sqrt{ \p(I)/b }\,,\\
&\frac{X_I}{b} \leq 3 \p(I) \label{eq:markov-like}
\end{align}
with probability at least $1 - \delta/(2n^2)$. 
Since $\p(i) > 1/(2n)$ for any $i \in [n]$, 
the max operation decreases the distance between $\p$ and $\hat \p$ pointwise. 
Hence, with probability at least $1 - \delta/(2n^2)$, we have
\begin{align} \label{eq:interval-learn-deviation}
\abs{\pmap(I)  - \p(I)}  \leq \sqrt{ \p(I)/b }\,.
\end{align}
Since there are at most $n^2$ intervals $I \subseteq [n]$ overall, 
by a union bound over all of them, we have that Equation~\ref{eq:interval-learn-deviation} 
holds simultaneously with probability at least $1 - \delta/2$ for every interval $I$.

To show that the ratio $\p(I) / \pmap(I)$ is bounded, we first consider the case 
$\p(I) \leq 4/b$. Since $\pmap(I) > {1}/{(2n)}$, 
it is easy to see that $\p(I) / \pmap(I) \leq 8 n/b$. Otherwise, we have $\p(I) > 4/b$. 
Following Equation~\ref{eq:interval-learn-deviation}, it holds that
\[
\abs{ \p(I) - \pmap(I) } \leq \sqrt{ \p(I) / b }
\leq \frac{1}{2} \p(I) \, ,
\]
This then implies that
\[
    \frac{\p(I)}{\pmap(I)}
    \leq 2\,.
\]
Lastly, since $\p(I) \geq \abs{I} / (2n)$, together with Equation~\ref{eq:markov-like}, we have
\[
\frac{\pmap(I)}{\p(I)} = \frac{ \max \lp( \abs{I} / (2n), X_I / b  \rp) }{ \p(I) } \leq 3.
\]
This concludes the proof.
\end{proof}

Let $\I$ be a partition of $[n]$. We try to learn $\p$ pretending that $\p$ 
is constant over each interval within $\I$ with the routine \algname{Empirical-Learning}. 
In particular, the algorithm uses \algname{Interval-Mass-Estimate} 
to obtain estimations of the mass of $I \in \I$ and then flattens 
the mass uniformly among elements $i \in I$. Notice that, due to the application of the median trick, 
the output is not necessarily a distribution but rather 
a positive measure\footnote{That is, $\hat{\p}$ might not sum to one, and thus is not itself a probability distribution.} 
$\hat{\p}$ on $[n]$ which is constant over each interval within $\I$.

\noindent If $\p$ is indeed a $k$-histogram, errors are only incurred on a special type of intervals 
(of which there are at most $k$), which we refer to as the \emph{breakpoint intervals}.

\begin{definition}[Breakpoint Intervals] \label{def:breakpoint}
Given a $k$-histogram $\p$ on $[n]$, we say that $i \in [n]$ is a \emph{breakpoint with respect to $\p$} 
if $\p(i) \neq \p(i+1)$; and that an interval $I\subseteq[n]$ is a
\emph{breakpoint interval (with respect to $\p$)} if $I$ contains at least one breakpoint.
\end{definition}

With Definition~\ref{def:breakpoint} in mind, we now specify
the formal learning guarantees. %
\begin{lemma} \label{lem:learn:modified}
Suppose $\p \in \mathcal H_k^n$. 
Let $\I$ be a partition of $[n]$ into $K$ intervals. Let $b \in \mathbb Z^{+}$,
$\delta\in(0,1]$ and $T \eqdef 6 \log (K / \delta)$.
There exists an algorithm \textup{\algname{Empirical-Learning}} that, 
given $m\eqdef Tb$ i.i.d.\ samples from $\p$, outputs a positive measure $\hat{\p}$ 
which satisfies the following with probability at least $1 - \delta$.
(i) $\hat \p$ is constant within each interval in $\I$. 
(ii) For every sub-intervals $J \subseteq I$ where $I\in\I$, 
is a \emph{non-breakpoint} interval with respect to $\p$, 
we have ${ \p(J) }/{  \hat \p(J) } \leq   \max (2, 8 n/b)$ and $\abs{\hat \p(J) - \p(J)} \leq \sqrt{ \p(J) / b }$.
\end{lemma}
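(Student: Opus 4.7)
My plan is to implement \algname{Empirical-Learning} as essentially \algname{Interval-Mass-Estimate} applied to just the $K$ intervals of $\I$, followed by a uniform ``flattening'' step inside each interval. Concretely, the algorithm splits the $m=Tb$ samples into $T=6\log(K/\delta)$ batches of size $b$, and for each $I\in\I$ computes $X_I$, the median across batches of the number of samples falling in $I$. It then sets $\tilde\p(I)\eqdef \max\bigl\{X_I/b,\,|I|/(2n)\bigr\}$ and outputs the measure $\hat\p$ defined by $\hat\p(i)\eqdef \tilde\p(I)/|I|$ for every $i\in I\in\I$. Property~(i) is immediate from this definition, so the work is in property~(ii).

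For~(ii), I would reuse the per-interval analysis already carried out in the proof of Lemma~\ref{lem:interval-mass-estimate}. Applied to a single $I\in\I$, Claim~\ref{clm:interval-median} combined with the $\max$-with-$|I|/(2n)$ operation yields both $|\tilde\p(I)-\p(I)|\leq\sqrt{\p(I)/b}$ and $\p(I)/\tilde\p(I)\leq\max(2,\,8n/b)$, each with failure probability exponentially small in $T$. Because I only need these guarantees on the $K$ intervals of $\I$ (rather than on all $O(n^2)$ subintervals of $[n]$), the choice $T=6\log(K/\delta)$ is enough to push the per-interval failure probability below $\delta/K$; a union bound then gives that both estimates hold simultaneously for every $I\in\I$ with probability at least $1-\delta$. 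Condition on this good event.

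It remains to transfer the interval-level guarantees to arbitrary sub-intervals $J\subseteq I$ whenever $I\in\I$ is non-breakpoint. The observation that makes this essentially free is that $\p$, being a $k$-histogram with no breakpoint inside $I$, is constant on $I$, and $\hat\p$ is constant on $I$ by construction; hence both $\p(J)$ and $\hat\p(J)$ are the same $|J|/|I|$ fraction of $\p(I)$ and $\tilde\p(I)$ respectively. The ratio bound transfers verbatim, $\p(J)/\hat\p(J)=\p(I)/\tilde\p(I)\leq\max(2,\,8n/b)$, while the absolute deviation scales down exactly as needed:
\[
|\hat\p(J)-\p(J)|=\frac{|J|}{|I|}\bigl|\tilde\p(I)-\p(I)\bigr|\leq \frac{|J|}{|I|}\sqrt{\p(I)/b}=\sqrt{\frac{|J|}{|I|}\cdot \frac{\p(J)}{b}}\leq \sqrt{\p(J)/b},
\]
using $\p(J)=(|J|/|I|)\,\p(I)$ and $|J|\leq|I|$.

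The only mildly delicate point is the union bound in the second paragraph: one must track constants so that $T=6\log(K/\delta)$ indeed suffices to make Claim~\ref{clm:interval-median} hold simultaneously over all $K$ intervals. Nothing deeper is needed beyond what is already done in the proof of Lemma~\ref{lem:interval-mass-estimate}; the non-breakpoint hypothesis enters only in the last step, where constancy of $\p$ on $I$ makes the subinterval-level bound reduce cleanly to the interval-level bound.
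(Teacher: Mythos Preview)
Your proposal is correct and follows essentially the same approach as the paper: apply the median-trick analysis (Claim~\ref{clm:interval-median}) to obtain interval-level estimates $\tilde\p(I)$ for each $I\in\I$, flatten to define $\hat\p$, and then exploit the constancy of both $\p$ and $\hat\p$ on a non-breakpoint interval to transfer the bounds to any sub-interval $J\subseteq I$ via the same $|J|/|I|$ scaling computation. The only (minor) difference is that the paper invokes Lemma~\ref{lem:interval-mass-estimate} as a black box, which nominally covers all $O(n^2)$ sub-intervals and thus needs $T=6\log(n/\delta)$ batches, whereas you restrict the median trick to the $K$ intervals of $\I$ and union-bound over those; your version is in fact better aligned with the sample count $m=Tb$ with $T=6\log(K/\delta)$ asserted in the lemma statement.
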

\begin{proof}[Proof of Lemma~\ref{lem:learn:modified}] 
The pseudocode of the algorithm is provided below. 
\begin{algorithm}[H]
\caption{Empirical-Learning} \label{alg:learn}
    \begin{algorithmic}[1]
    \Require  $m$ i.i.d.\ samples from distribution $\p$ on [n]; 
    partition $\I = \{ I_1, I_2, \cdots, I_K \}$ of the domain~$[n]$; failure probability $\delta$.
    \State Let $\mathcal S$ denote the multiset of the $m$ i.i.d.\ samples.
    \State Construct the Interval Estimator $\pmap \gets \algname{Interval-Mass-Estimate}
    ( \mathcal S, \delta ).$
    \For{$1\leq j \leq K$}
    \State Compute the estimate
    $ \hat{\p}(i) \eqdef  \frac{ \pmap( I_j ) }{ \abs{I_j} } \qquad 1\leq j\leq K, i \in I_j \;.$
    \EndFor
    \State Output the measure $\hat{\p}$. 
    \end{algorithmic}
\end{algorithm}
\noindent Let $I$ be a non-breakpoint interval. 
It is easy to see that $\hat \p( I ) = \pmap (I)$.
By Lemma~\ref{lem:interval-mass-estimate}, 
it holds that
\begin{align*}
& \frac{ \p(I) }{  \hat \p(I) } \leq   \max (2 , 8 n/b) \, ,    \\
& \abs{\hat \p(I) - \p(I)} \leq \sqrt{ \p(I) / b } \,.    
\end{align*}
with probability at least $1 - \delta$. 
Conditioned on that, we easily have, for any sub-interval $J \subseteq I$, 
\[
\frac{ \p(J) }{  \hat \p(J) } = \frac{ \p(I) }{  \hat \p(I) }
 \leq   \max (2, 8 n/b) \;,
\]
since both $\p$ and $\hat \p$ are uniform within $I$.
Furthermore, we also have
\[
\abs{ \p(J) - \hat \p(J) }
= \abs{ \p(I) - \hat \p(I) } \cdot \frac{ |J| }{ |I| }
\leq \sqrt{ \frac{\p(I)}{ b} } \cdot \frac{ |J| }{ |I| }
= \sqrt{ \frac{\p(J)}{ b} } \cdot \sqrt{\frac{ |J| }{ |I| }}
\leq \sqrt{ \frac{\p(J)}{ b } }  \,,
\]
where the first equality follows from the fact that $\p$ and $\hat \p$ 
are both uniform within $I$, the second inequality follows from our conditioning, 
and the last inequality follows from $|J|\leq |I|$.
\end{proof}

By combining the two guarantees in (ii) in Lemma~\ref{lem:learn:modified}, 
one can see the $\chi^2$ divergence between $\p$ and $\hat \p$, 
restricted to the non-breakpoint intervals, will be at most $\eps^2$ with high probability 
if taking $\Theta(KT/\eps^2)$ many samples.
Following a result from \cite{KamathOPS15,Canonne16}, 
one only needs $\Theta(K/\eps^2)$ samples to learn a $K$-histogram up to $\eps^2$ error 
in this restricted notion of $\chi^2$ divergence. 
One may wonder whether this is enough for us, and if the stronger (but less natural) guarantees provided by~Lemma~\ref{lem:learn:modified}, which end up increasing the number of samples required, are necessary. 
As we will see in the next section, we indeed need not only that the $\chi^2$ divergence is small, 
but also that the ratio $\p(I)/\hat \p(I)$ is bounded for all non-breakpoint intervals. 
In particular, this latter property enables us to compute relatively accurate estimates of the $\chi^2$ divergence 
restricted to subintervals and (consequently) to tell whether $\p$ is constant 
or from far from being constant on an interval.

\subsection{Bad Interval Detection} \label{ssec:sieve}
While large contributions to the $\chi^2$ divergence 
(assuming the learning phase was successful) 
will only come from breakpoint intervals, not all of them will necessarily contribute significantly 
to the $\chi^2$ divergence. In particular, a breakpoint interval is only considered ``bad'', 
and needs to be filtered out, if the error incurred is proportional to the number of breakpoints within. 

We now give the formal definition of such a ``bad interval.''

\begin{definition}[$\eps$-Bad-Interval] \label{def:bad-interval}
Fix a partition $\mathcal I$ of $[n]$ containing $K$ intervals. 
Let $I \in \mathcal I$ be a breakpoint interval of $\p$. 
Furthermore, suppose $I$ contains $j-1$ breakpoints, 
i.e., $\p$ is $j$-piecewise uniform in $I$. 
We say that $I \in \I$ is an \emph{$\eps$-bad interval} with respect to $\hat \p$ and $\I$ if 
$\dchi{I}{\p}{\hat{\p}} \geq j \cdot \eps^2 / K$.
\end{definition}

The definition suits our purpose for two reasons.
(i) The total $\chi^2$ error between $\p$ and $\hat \p$ on the set of ``good'' intervals 
(complement of the set of ``bad'' intervals) is small. 
Indeed, let $\G \in \mathcal I$ be a set containing no $\eps$-bad intervals. 
Since there are at most $K$ intervals contained in $\G$ 
and $k$ breakpoints contained in the intervals in $\G$, it is easy to see that 
$\dchi{\G}{\p}{\hat{\p}} \leq  O( \eps^2 )$.
(ii) One can reliably separate bad intervals from non-breakpoint intervals assuming the learning phase was successful.
To see why, note that in that case every non-breakpoint interval $I$ 
satisfies $\dchi{J}{\p}{\hat{\p}} \ll \eps^2 / K$ for all $J \subseteq I$ with high probability.
On the contrary, for any bad interval $I$, we claim there must be a sub-interval 
$Q \subseteq I$ where $\dchi{Q}{\p}{\hat{\p}} \geq  \eps^2 / K$ 
and both $\p$ and $\hat \p$ are constant within. In particular, if $I$ is an $\eps$-bad interval 
that contains $(j-1)$ breakpoints, we then have a partition $\{Q_1, \cdots, Q_j\}$ 
of $I$ over which $\p$ is piecewise constant and at least one of them will have $\chi^2$ error at least $\eps^2 / K$.

Our next step is to show how we can leverage the separating condition 
to design an efficient bad interval detection mechanism. 
This is where our method \emph{significantly differs} from \cite{Canonne16}.
At a high level, we take another set of independent samples 
to get an estimate $\pmap(Q)$ of $\p(Q)$ for all $Q \subseteq [n]$ simultaneously. 
Then, we compare $\pmap(Q)$ with $\hat \p(Q)$ to see whether we have 
$\dchi{Q}{\p}{\hat{\p}} \geq  \eps^2 / K$, which would in turn 
imply the interval $I \supseteq Q$ from the given partition is $\eps$-bad.
We next provide the pseudo-code for \algname{Learn-and-Sieve}, 
which finds a positive measure $\hat \p$ on $[n]$ and a domain $\mathcal B$ 
such that $\dchi{[n] \backslash \mathcal B}{\p}{\hat \p} \leq O(\eps^2)$ 
provided $\p \in \mathcal H_k^n$. For the sake of exposition, we defer its 
detailed analysis to Appendix~\ref{appendix:learn-and-sieve}, and provide here an outline of the argument.
\begin{algorithm}
\caption{Learn-And-Sieve} \label{alg:raw_sieve}
\begin{algorithmic}[1]
\Require Sample access to $\p$; a partition $\I$ of $[n]$ containing $K$ intervals; 
accuracy $\eps$; failure probability $\delta$.
\State Let $m = C \cdot (   K/\eps^2 + \sqrt{Kn} / \eps ) \cdot \log(n / \delta) $ for a sufficiently large constant $C$.
\State Draw $2m$ i.i.d. samples from $\p$ and split the samples evenly into $\mathcal S_1, \mathcal S_2$.
\State $\hat{\p} \gets \algname{Empirical-Learning}\lp( \mathcal S_1, \I, \delta/4\rp)$, $\pmap \gets \algname{Interval-Mass-Estimate}(\mathcal S_2, \delta/4)$, $\Bad \gets \{\}$.
\For{all intervals $Q \subseteq I$ for some $I \in \mathcal I$}
    \If{
    $\pmap(Q)/{\hat{\p}(Q)} > 6 \cdot \max(1, \eps \sqrt{n/K})$ \textbf{or} $ \abs{ \pmap(Q) - \hat{\p}(Q) } > \frac{1}{2} \sqrt{ \hat{\p}(Q) \eps^2/K }$} \label{rej:1}
    \State Add $I$ to $\Bad$.
    \EndIf
\EndFor
\State Output \textsf{Reject} if $\Bad$ contains more than $k$ intervals. 
Otherwise, \Return $\Bad$, $\hat{\p}$. \label{line:more-than-k}
\end{algorithmic}
\end{algorithm}

\begin{lemma}[Sieving Lemma] \label{lem:sieve}
Given a partition $\I$ containing $K$ intervals, sample access to $\p$ on $[n]$ and $\delta \in (0,1)$. 
Then, the output of \textup{\algname{Learn-and-Sieve}} (Algorithm~\ref{alg:raw_sieve}) satisfies the following.
    (i)~Suppose $\p \in \mathcal H_k^n$. Then the algorithm returns a positive measure $\hat{\p}$ and $\Bad$ 
    such that $ \dchi{ [n] \backslash \Bad }{\p}{\hat{\p}} \leq \eps^2$ 
    with probability at least $1 - \delta$.
    (ii) The output $\mathcal B$ contains at most $k$ intervals (if the algorithm does not reject).
(iii) At most $O(( {K}/{\eps^2} + { \sqrt{Kn} }/{\eps} ) \cdot \log(n/\delta)  )$ samples are used.
\end{lemma}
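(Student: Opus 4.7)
The plan is to condition on the success of the two subroutines invoked inside \algname{Learn-and-Sieve}, and then track separately which intervals of $\I$ get added to $\Bad$. By Lemma~\ref{lem:learn:modified} applied to $\mathcal S_1$ and Lemma~\ref{lem:interval-mass-estimate} applied to $\mathcal S_2$, each with failure probability $\delta/4$, a union bound yields with probability at least $1-\delta/2$ the following joint event, with batch size $b=\Omega(K/\eps^2+\sqrt{Kn}/\eps)$: for every sub-interval $J\subseteq I$ of a non-breakpoint $I\in\I$, $\p(J)/\hat\p(J)\leq \max(2,8n/b)$ and $|\hat\p(J)-\p(J)|\leq\sqrt{\p(J)/b}$; and for every sub-interval $Q\subseteq[n]$, $\p(Q)/\pmap(Q)\leq\max(2,8n/b)$, $\pmap(Q)/\p(Q)\leq 3$, and $|\pmap(Q)-\p(Q)|\leq\sqrt{\p(Q)/b}$. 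All subsequent reasoning is conducted under this joint event.

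First I would show that a non-breakpoint interval is never added to $\Bad$, which at once takes care of (ii) and of the ``which intervals survive'' half of (i). Fix such $I\in\I$ and any $Q\subseteq I$. Since $\p$ and $\hat\p$ are both uniform on $I$, one has $\pmap(Q)/\hat\p(Q)\leq 3\p(Q)/\hat\p(Q)\leq 3\max(2,8n/b)$, which the choice of $b$ keeps below the multiplicative threshold $6\max(1,\eps\sqrt{n/K})$ in both regimes $b\geq 4n$ and $b<4n$ (using $b\geq CK/\eps^2\geq 4n$ whenever $\eps\leq\sqrt{K/n}$, and $b\geq C\sqrt{Kn}/\eps$ otherwise). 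Similarly, the triangle inequality gives $|\pmap(Q)-\hat\p(Q)|\leq 2\sqrt{\p(Q)/b}\leq 2\sqrt{\hat\p(Q)\max(2,8n/b)/b}$, and the same case analysis on the max forces this quantity to be at most $\tfrac12\sqrt{\hat\p(Q)\eps^2/K}$. Since a $k$-histogram $\p$ has at most $k-1$ breakpoints, $\Bad$ contains at most $k-1$ intervals, which both prevents rejection at Line~\ref{line:more-than-k} and proves (ii).

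Next I would prove that every $\eps$-bad interval $I$ ends up in $\Bad$. By Definition~\ref{def:bad-interval}, $I$ has $j$ uniform pieces $Q_1,\dots,Q_j$ of $\p$ and $\sum_i (\p(Q_i)-\hat\p(Q_i))^2/\hat\p(Q_i)\geq j\eps^2/K$; pigeonhole produces $Q_\star$ with $|\p(Q_\star)-\hat\p(Q_\star)|\geq\sqrt{\hat\p(Q_\star)\eps^2/K}$. If the multiplicative test on $Q_\star$ does not already trip, then $\p(Q_\star)/\hat\p(Q_\star)=O(\max(1,\eps\sqrt{n/K}))$, and repeating the calculation from the previous paragraph gives $|\pmap(Q_\star)-\p(Q_\star)|\leq \tfrac12\sqrt{\hat\p(Q_\star)\eps^2/K}$, so the additive test on $Q_\star$ fires. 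Aggregating $\chi^2$-contributions over $[n]\setminus\Bad$: each non-breakpoint interval contributes at most $\max(2,8n/b)/b$, so the $K$ of them contribute $O(\eps^2)$ in total; each non-$\eps$-bad breakpoint interval $I$ with $j_I$ pieces contributes at most $j_I\eps^2/K$, and $\sum_I j_I\leq(k-1)+K$, again $O(\eps^2)$. Choosing the constant $C$ in $m$ sufficiently large absorbs the constant factors and yields (i); (iii) is immediate from the $2m$ samples drawn in Line~2.

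The main obstacle is calibrating the multiplicative threshold $6\max(1,\eps\sqrt{n/K})$ and the additive threshold $\tfrac12\sqrt{\hat\p(Q)\eps^2/K}$ against $b$ uniformly over the two regimes $\eps\leq\sqrt{K/n}$ and $\eps>\sqrt{K/n}$: the multiplicative test must be loose enough that non-breakpoint sub-intervals with imperfect ratio estimates survive, yet tight enough that whenever it does not fire, the additive bound $|\pmap(Q)-\p(Q)|=O(\sqrt{\p(Q)/b})$ is small on the $\hat\p(Q)$-scale required for the additive test to be faithful. This tension is precisely what drives the $\sqrt{Kn}/\eps$ term in the sample complexity.
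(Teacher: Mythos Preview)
Your overall architecture matches the paper's: condition on the two subroutines succeeding, show non-breakpoint intervals are never added to $\Bad$, show every $\eps$-bad interval is added, then sum the $\chi^2$ contributions. The exclusion argument and the final aggregation are fine.

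The gap is in your inclusion argument. You write: ``If the multiplicative test on $Q_\star$ does not already trip, then $\p(Q_\star)/\hat\p(Q_\star)=O(\max(1,\eps\sqrt{n/K}))$, and repeating the calculation from the previous paragraph gives $|\pmap(Q_\star)-\p(Q_\star)|\leq \tfrac12\sqrt{\hat\p(Q_\star)\eps^2/K}$.'' But the multiplicative test bounds $\pmap(Q_\star)/\hat\p(Q_\star)$, not $\p(Q_\star)/\hat\p(Q_\star)$, and you cannot invoke the previous paragraph's ratio bound $\p(Q)/\hat\p(Q)\leq\max(2,8n/b)$ here: that came from Lemma~\ref{lem:learn:modified}, which only applies to sub-intervals of \emph{non}-breakpoint intervals, whereas $Q_\star$ sits inside a breakpoint interval. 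From Lemma~\ref{lem:interval-mass-estimate} alone you only get $\p(Q_\star)/\pmap(Q_\star)\leq\max(2,8n/b)$, and chaining this with the test gives $\p(Q_\star)/\hat\p(Q_\star)=O\big(\max(1,\eps\sqrt{n/K})^{2}\big)$, which is one factor of $\eps\sqrt{n/K}$ too large to close the argument in the regime $\eps\gg\sqrt{K/n}$.

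The paper patches this with an explicit split on whether $\p(Q_\star)\leq 2\eps/\sqrt{Kn}$. In the light case one bypasses $\pmap$ entirely and uses the deterministic floor $\hat\p(Q_\star)\geq 1/(2n)$ to get $\p(Q_\star)/\hat\p(Q_\star)\leq 4\eps\sqrt{n/K}$ directly. In the heavy case $\p(Q_\star)\gg 1/b$, so $|\pmap(Q_\star)-\p(Q_\star)|\leq\sqrt{\p(Q_\star)/b}\leq\p(Q_\star)/2$, giving $\p(Q_\star)\leq 2\pmap(Q_\star)$ and hence the constant-factor conversion you need. Equivalently, you could solve the quadratic $\p(Q_\star)\leq\pmap(Q_\star)+\sqrt{\p(Q_\star)/b}$ to obtain $\p(Q_\star)\leq O(\pmap(Q_\star))+O(1/b)$ and absorb the additive $O(1/b)$ term via $\hat\p(Q_\star)\geq 1/(2n)$ and $b\gg\sqrt{Kn}/\eps$. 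Either way, this case analysis is the missing ingredient; without it, the step you flag as ``repeating the previous calculation'' does not go through.
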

\begin{proof}[Proof Sketch]
We claim that if $\p \in \mathcal H_k^n$, then $\Bad$ contains all the $\eps$-bad intervals 
and no non-breakpoint intervals with high probability. 
Let $I$ be a non-breakpoint interval. 
For $b = \Theta( m / \log(n/\delta) ) = \Theta(K/\eps^2 + \sqrt{Kn} / \eps)$, with high probability we have that
$\abs{ \pmap(Q) - \p(Q)  } \leq \sqrt{ \p(Q)/b }$,
$
\abs{ \hat{\p}(Q) - \p(Q)  } \leq \sqrt{ \p(Q)/b } 
$ 
and 
${\p(Q)}/{ \hat{\p}(Q)} \leq  \max( 2 \, , 8 \cdot n / b)$,
which follow from Lemmas~\ref{lem:interval-mass-estimate}
and~\ref{lem:learn:modified}. Combining this with the triangle inequality 
and our choice of $b$ implies the second condition of Line~\ref{rej:1} will be false. 
The first condition can be shown to be false by rewriting 
$\pmap(Q) / \hat \p(Q)$ as ${\pmap(Q)}/{ \p(Q) } \cdot {\p(Q)}/{ \hat \p(Q)}$, 
which are themselves bounded, with high probability, 
by $3$ and $\Theta(1) \cdot \max(1, \eps \sqrt{n/K})$ 
again by Lemmas~\ref{lem:interval-mass-estimate} 
and~\ref{lem:learn:modified} and our choice of~$b$.

Let $I$ be a breakpoint interval. We then have
$
\abs{\p(Q) - \hat{\p}(Q)} \geq \sqrt{ \hat \p(Q) \cdot {\eps^2}/{K} }
$ for some sub-interval $Q \subset I$. 
If $\p(Q)$ is light ($\p(Q) \leq 2 \eps / \sqrt{Kn}$), 
we can show that $\p(Q)/b \leq 1/4 \cdot \hat \p(Q) \cdot \eps^2/K$, 
making $\pmap(Q)$, our estimation for $\p(Q)$, sufficiently accurate 
such that the second condition of Line ~\ref{rej:1} will be true. 
Otherwise, as $b \gg \sqrt{Kn}/ \eps$, the estimation $\pmap(Q)$ 
will be within multiplicative factors of $\p(Q)$. If $\hat \p(Q)$ is not 
much lighter than $\p(Q)$, we can again show that 
$\p(Q)/b \leq 1/4 \cdot \hat \p(Q) \cdot \eps^2/K$. 
Otherwise, the first condition of Line~\ref{rej:1} will be true. 
Conditioned on the event that $\B$ includes all $\eps$-bad intervals 
and no non-breakpoint intervals, it is easy to see that $\Bad$ 
will contain no more than $k$ intervals and that 
$\dchi{ \mathcal I  \backslash \mathcal B }{\p}{\hat{\p}} \leq O(\eps^2)$. 
We note that points (i) and (iii) follow from the definition of the algorithm.
\end{proof}

\algname{Learn-and-Sieve} (Algorithm~\ref{alg:raw_sieve}) 
outputs a fully specified description $\hat \p$ and a subdomain $\G \eqdef [n] \backslash \B$ 
such that $\dchi{\G}{\p}{\hat \p}$ is small given $\p \in \mathcal H_n^k$. 
For testing purposes, this is a reasonable divergence from the ideal guarantee 
that $\dchi{}{\p}{\hat \p}$ is small \emph{as long as $\p(\B)$ is also small}. 
If so, we can set $\hat \p(i) = 0$ for $i \in \B$ and invoke \algname{Tolerant-Identity-Test} 
with $\p$ and $\hat \p$. If the test passes, we then know that 
$\TV^{\G} \lp( \p, \hat \p\rp) \leq \eps/2$: this together with $\p(\B) \leq \eps/2$ 
then gives $\TV \lp( \p, \hat \p \rp) \leq \eps$.

Unfortunately, running \algname{Learn-and-Sieve} only once 
we may have $\p(\B)=\Omega(1)$. To handle this, we will need more fine-grained sieving procedure, 
which uses \algname{Approx-Sub-Divide} to further partition the bad intervals detected 
and invokes \algname{Learn-and-Sieve} \emph{iteratively}.
In each iteration, the total mass of the bad intervals shrinks by a constant factor, 
allowing us to reach $\p(\B) \ll \eps$ in at most $O(\log(1/\eps))$ iterations. 
Doing so leads to our final algorithm 
whose pseudo-code (Algorithm~\ref{alg:main}) 
and detailed analysis are provided next. %

\subsection{Main Testing Algorithm} \label{sec:main-test}
We now provide our final testing algorithm, Algorithm~\ref{alg:main}, whose analysis leads to the upper bound stated in Theorem~\ref{thm:main}.
\begin{algorithm}[htp!]
  \caption{Divide-And-Learn-And-Sieve} \label{alg:main}
  \begin{algorithmic}[1]
    \Require Sample access to the distribution $\p$; domain size $[n]$; accuracy $\eps$.
    \State Set $\I^{(0)} = \B^{(0)} = \{ [n] \} $. 
    \State Set $T \gets 3  \log (1/\eps)$, $\delta \gets \frac{1}{100T}$, $t \gets 0$, $r = 1$.
    \While{$r > \eps/8$}
        \State $\S^{(t+1)} \gets \algname{Approx-Sub-Divide} \lp(32 k, \B^{(t)}, \delta \rp)$. \label{line:divide}
        \State Set $\I^{(t+1)} = (\I^{(t)} \setminus \B^{(t)}) \cup \S^{(t+1)}$.
        \Comment{Note that $\I^{(t+1)}$ is still a partition of $[n]$.}
        \State 
        $\Q^{(t+1)}, \hat{\p}^{(t+1)} \gets \algname{Learn-And-Sieve}\!\lp( \I^{(t+1)}, \frac{\eps}{ 4 \sqrt{T} }, \delta \rp)$. \label{line:learn-and-sieve}
        \State \textsf{Reject} if \algname{Learn-And-Sieve} outputs \textsf{Reject}.
        \State Set $\B^{(t+1)} = \S^{(t+1)} \cap \Q^{(t+1)}$.
        
        \State Take $ \ell = \Theta \lp( \log(1 / \delta) / \eps \rp)$ i.i.d.\ samples from $\p$.
        \State Let $X$ be the number of samples falling in $\B^{(t+1)}$. Set $r \gets X/\ell$. \label{line:mass-estimate}
        \State $t \gets t+1$
    \EndWhile
    \State  Denote $\G^{(j)} = \S^{(j)} \setminus \B^{(j)}$ for all $j\geq 1$.~\label{step:G-def}~\Comment{Note that the union of $\G^{(1)}, \cdots \G^{(t)}, \B^{(t)}$ forms a partition of the domain $[n]$.}
    \State We will consider the measure $\bar \p$ such that on intervals $I \in \G^{(j)}$, $\bar \p(i) = \hat{\p}^{(j)} (i)$; and on intervals $I \in \B^{(t)}$, $\bar \p(i) = 0$.
    \State\label{step:check1} Use Dynamic Programming to check whether there is a $k$-histogram that is $\eps/2$-close to $\bar \p$ in $\ell_1$ distance. If not, \textsf{Reject}. \Comment{Can be done in time $\poly(k, 1/\eps, n)$ as shown in~\cite[Lemma~4.11]{CanonneDGR18}.}
    \State\label{step:check2} Output \textsf{Accept} if $\algname{Tolerant-Identity-Test}\lp(\p, \hat \p, \eps\rp)$ outputs \textsf{Accept}. Otherwise \textsf{Reject}. \label{line:test}
  \end{algorithmic}
\end{algorithm}

\begin{theorem}[Upper Bound of Theorem~\ref{thm:main}, restated] \label{thm:main:ub}
There exists a testing algorithm for the class of $k$-histograms on $[n]$ 
with sample complexity $m = \wt O( \sqrt{nk}  / \eps + k/\eps^2 + \sqrt{n}/\eps^2)$ and running time $\poly(m)$.
\end{theorem}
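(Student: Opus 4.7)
The plan is to analyze Algorithm~\ref{alg:main} by establishing three points: (a)~the loop terminates in $T = 3\log(1/\eps)$ iterations with high probability when $\p \in \mathcal H_k^n$; (b)~at termination, the positive measure $\bar\p$, together with the dynamic-programming check in Step~\ref{step:check1} and the tolerant identity tester in Step~\ref{step:check2}, correctly distinguishes the two cases; and (c)~the overall sample complexity matches the claimed bound. Throughout, I would union bound over the $O(T)$ calls to \algname{Learn-and-Sieve}, \algname{Approx-Sub-Divide}, and the mass estimator in Line~\ref{line:mass-estimate}, each invoked with failure probability $\delta = 1/(100T)$.

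The core claim driving termination is geometric decay of the bad mass in the completeness case. Conditioning on all invocations succeeding, Lemma~\ref{lem:sub-divide}(iii) with $B = 32k$ implies every non-singleton interval in $\S^{(t+1)}$ has $\p$-mass at most $16\p(\B^{(t)})/B = \p(\B^{(t)})/(2k)$. Singletons contain one element and are therefore non-breakpoint with respect to the histogram $\p$, so by the non-inclusion guarantee underlying Lemma~\ref{lem:sieve} they are excluded from $\Q^{(t+1)}$. Combined with $|\Q^{(t+1)}| \leq k$ from Lemma~\ref{lem:sieve}(ii), this gives $\p(\B^{(t+1)}) = \p(\S^{(t+1)}\cap\Q^{(t+1)}) \leq k\cdot \p(\B^{(t)})/(2k) = \p(\B^{(t)})/2$. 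Hence $\p(\B^{(T)}) \leq 2^{-T} \ll \eps$, and a Chernoff bound on the $\ell = \Theta(\log(1/\delta)/\eps)$ samples in Line~\ref{line:mass-estimate} ensures the test $r \leq \eps/8$ fires.

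For completeness, I would sum the per-iteration $\chi^2$ guarantees from Lemma~\ref{lem:sieve}(i) with accuracy $\eps/(4\sqrt{T})$ to obtain $\sum_{j\geq 1}\dchi{\G^{(j)}}{\p}{\hat\p^{(j)}} \leq T\cdot \eps^2/(16T) = \eps^2/16$. Since $\bar\p = \hat\p^{(j)}$ on $\G^{(j)}$ and $\bar\p \equiv 0$ on $\B^{(t)}$, the set $\mathcal A = \{i:\bar\p(i)\geq \eps/(50n)\}$ lies within $\bigcup_j \G^{(j)}$, so the same bound yields $\dchi{\mathcal A}{\p}{\bar\p} \leq \eps^2/16$; after rescaling the accuracy constants this lies below the $\eps^2/500$ acceptance threshold of Lemma~\ref{prop:chi-stats}. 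The DP check passes because $\p$ itself certifies a $k$-histogram within $\ell_1$-distance $\p(\B^{(t)}) + \sum_j \TV^{\G^{(j)}}(\p, \hat\p^{(j)}) \leq \eps/8 + O(\eps)$ of $\bar\p$ (using $\TV \leq \tfrac12\sqrt{\chi^2}$ and Cauchy--Schwarz across the $\G^{(j)}$'s). For soundness, I argue by contrapositive: if the algorithm accepts, then Step~\ref{step:check1} produces a $k$-histogram $\h$ with $\|\bar\p - \h\|_1 \leq \eps/2$; Lemma~\ref{prop:chi-stats} gives $\TV^{\mathcal A}(\p,\bar\p) \leq \eps$; $\bar\p$ carries at most $\eps/50$ mass outside $\mathcal A$ by construction; and $\p(\B^{(t)}) \leq \eps/4$ by Chernoff on $r$. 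The triangle inequality then yields $\TV(\p, \h) \leq O(\eps)$, contradicting the $\eps$-far assumption after rescaling constants.

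For the sample complexity, each iteration uses $\wt O(kT/\eps^2 + \sqrt{knT}/\eps)$ samples for \algname{Learn-And-Sieve} with $K = \Theta(k)$ and accuracy $\eps/(4\sqrt{T})$; summing over $T = O(\log(1/\eps))$ iterations absorbs the extra $T$ factors into $\wt O(\cdot)$, giving $\wt O(k/\eps^2 + \sqrt{kn}/\eps)$. The \algname{Approx-Sub-Divide} calls contribute $\sum_t O((k/\p(\B^{(t)}))\log(k/\delta))$, a geometric series dominated by $\wt O(k/\eps)$; mass estimation adds $\wt O(1/\eps)$ per iteration; and \algname{Tolerant-Identity-Test} contributes $\wt O(\sqrt{n}/\eps^2)$ by~\cite{AcharyaDK15}, supplying the final summand. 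The main obstacle is the soundness step: carefully balancing the residual mass $\p(\B^{(t)})$ left at termination, the mass of $\p$ on the ``light'' region $[n]\setminus\mathcal A$ where Lemma~\ref{prop:chi-stats} gives no guarantee, and the $\chi^2$-to-$\TV$ gap in the tolerant tester, so that these contributions combine to preserve a clean $\Theta(\eps)$-separation---this is what forces the choice of termination threshold $\eps/8$ and accuracy parameter $\eps/(4\sqrt{T})$.
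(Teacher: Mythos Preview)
Your proposal is correct and mirrors the paper's proof closely: geometric decay of $\p(\B^{(t)})$ via Lemmas~\ref{lem:sub-divide} and~\ref{lem:sieve} to establish termination, summing the per-iteration $\chi^2$ bounds over the $\G^{(j)}$ for completeness, the triangle-inequality contrapositive for soundness, and the same accounting for sample complexity across the subroutines. One minor bookkeeping correction: the partition size grows to $K = |\I^{(t)}| = O(kT)$ rather than $\Theta(k)$, since each iteration adds $O(k)$ new intervals to $\I^{(t)}$, but this extra $\polylog$ factor is harmlessly absorbed into the $\wt O(\cdot)$ bound.
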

\begin{proof}[Proof of Theorem~\ref{thm:main:ub}]
We first argue that the algorithm terminates in $T=O(\log(1/\eps))$ rounds with high probability. 
By Lemma~\ref{lem:sub-divide}, we have that 
$\p(I) \leq \frac{16}{32 k} \cdot \p(\mathcal B^{(t)}) = \frac{1}{2k} \cdot \p(\mathcal B^{(t)})$ 
for every non-singleton interval $I \in \mathcal S^{(t+1)}$ with probability at least $1 - \delta$. 
By Lemma \ref{lem:sieve}, the subroutine \algname{Learn-And-Sieve} selects (removes) at most $k$ intervals 
if it does not output reject.
Notice that $\Bad^{(t+1)}$ will not include any singleton as singletons cannot be breakpoint intervals.
Then, it holds that $\p(\mathcal B^{(t+1)}) \leq \p(\mathcal Q^{(t+1)}) \leq \frac{1}{2} \p(\mathcal B^{(t)})$.
Hence, the mass of $\mathcal B^{(t)}$ will drop below $\eps/100$ 
after at most $T := 3 \cdot \log(1/\eps)$ iterations with probability at least 
\begin{align} \label{eq:prob-calc}
(1 - \delta)^{T} = 
(1 - \frac{1}{100T})^{T} \geq \frac{9}{10} \, ,    
\end{align}
where the second inequality holds when $T$ is sufficiently large. 
On the other hand, at the end of the iterations, we have that 
$ r \leq  10 \cdot \p( \mathcal B^{(t)} ) \leq \eps/10$ with probability at least $\frac{9}{10}$ 
by Markov's inequality. Combining the two facts then gives that 
the algorithm exits the while loop in at most $T \eqdef 3 \cdot \log(1/\eps)$ 
iterations with probability at least $0.9\cdot 0.9 > 8/10$.

Furthermore, we claim that with probability at least $9/10$, 
it holds that $\p(\mathcal B^{(t)}) \leq \eps/4$ when the algorithm exits the while loop.
Suppose at the $j$-th iteration, we have $\p(\mathcal B^{(j)}) > \eps/4$. 
Then, by the multiplicative Chernoff bound, we have 
$$
\Pr \lp[ r \leq \eps/8  \rp]
\leq 
\Pr \lp[ r \leq  1/2 \cdot \p(\mathcal B^{(j)}) \rp]
\leq   \delta \;.
$$
Hence, following the same calculation as Equation~\ref{eq:prob-calc}, 
our claim holds with probability at least $9/10$.
Conditioning on (i), the algorithm terminates in $T$ iterations and (ii) 
\begin{align} \label{eq:small-bad-mass}
\p(\mathcal B^{(t)}) \leq \eps/4
\end{align}
when the algorithm exits the loop. 
We now proceed to argue it outputs the correct testing result with  probability at least $2/3$.

\paragraph{Completeness.} Suppose we have $\p \in \mathcal{H}_k^n$. 
At the $t$-th iteration, we claim that, with probability at least $1 - \delta$, it holds
\begin{align} \label{eq:inductive}
\dchi{\G^{(t)}}{ \p }{ \hat{\p}^{(t)} } \leq \frac{\eps^2}{16T} \,,
\end{align}
where $\G^{(t)} = \S^{(t)} \bs \B^{(t)}$ as defined in line~\ref{step:G-def} 
and $\hat{\p}^{(t)}$ is the learned distribution in the $t$-th iteration.
By Lemma \ref{lem:sieve}, it holds 
\[
\dchi{\I^{(t)} \bs \Q^{(t)}}{ \p }{ \hat{\p}^{(t)}} \leq \frac{\eps^2}{16T} \;.
\]
Since $\G^{(t)} = \S^{(t)} \bs \B^{(t)}$ is a subset of $\I^{(t)} \bs \Q^{(t)}$, 
the claim in Equation~\ref{eq:inductive} follows.
Recall that we condition on the algorithm running for at most $T$ iterations.
Combining this with Equation~\ref{eq:inductive}, if we denote $\G = \bigcup_{t=1}^T \G^{t}$, it holds 
\begin{align} \label{eq:chi-squre-bound}
\dchi{ \G }{ \p }{ \bar \p } \leq 
\sum_{t=1}^T \dchi{ \G^{(t)} }{ \p }{ \hat{\p}^{(t)} } \leq \eps^2/16
\end{align}
with probability at least $(1 - \delta)^{T} \geq \frac{9}{10}$.
Observe that $\G$ is precisely the sub-domain $\mathcal A = \{ i \in [n]: \bar \p(i) \geq \eps/(50n) \}$ 
that will be used to compute the $\chi^2$ statistic, 
since $\bar \p(i) \geq \frac{1}{2n}$ for $i \in \G$ and $\bar \p(i) = 0$ for $i \not \in \G$. 
Conditioning on Equation~\ref{eq:chi-squre-bound}, by Proposition \ref{prop:chi-stats}, 
Line~\ref{step:check2} will output accept with probability at least $9/10$ by Chebyshev's Inequality.
Then, Equation~\ref{eq:chi-squre-bound} together with the conditioning $\p(\mathcal B^{(t)}) \leq \eps/4$ 
also implies that $\totalvardist{ \p }{ \bar \p } \leq \eps/2$ and 
$\p \in \mathcal H_k^n$, line~\ref{step:check1} will also pass. 
Overall, the algorithm accepts with probability at least $2/3$.

\paragraph{Soundness.} Suppose now that $ \totalvardist{\p}{\h} > \eps$ for every $\h \in \mathcal H_k^n$.
For the sake of contradiction, assume that line~\ref{step:check1} and line~\ref{step:check2} both pass. 
By Line~\ref{step:check2} and the contrapositive of Proposition~\ref{prop:chi-stats}, 
we have that $\totalvardistrestr{\G}{\p}{\bar \p} \leq \eps/4$ with probability at least $\frac{9}{10}$.

By definition, we have that $\bar \p( \mathcal B^{(t)} ) = 0$. 
Since $\p(\mathcal B^{(t)}) \leq \eps/4$ by our conditioning, 
it then holds that $\totalvardist{\p}{\bar \p} \leq \eps/2$.
Then, by Line~\ref{step:check1}, there exists a $k$-histogram $\mathbf h^\ast$ 
satisfying $\totalvardist{\mathbf h}{\bar \p} \leq \eps/2$. 
By the triangle inequality, we have that $\totalvardist{\mathbf h^\ast}{\p} \leq \eps$. 
This contradicts the assumption that $\p$ is $\eps$-far from any $k$-histogram. 
Hence, at least one of the two lines will output reject with probability at least $2/3$.

\paragraph{Sample complexity.} Finally, the samples from the unknown distribution $\p$ 
are used in five different types of routines -- dividing (Line~\ref{line:divide}), 
learning, sieving (Line~\ref{line:learn-and-sieve}), testing (Line~\ref{line:test}), 
and mass estimation (Line~\ref{line:mass-estimate}). 
By Lemma \ref{lem:sub-divide}, in one iteration, the dividing phase uses 
$O\big ( k \log(k/\delta)   /  \p( \mathcal B^{(t)}) \big)$
samples, where $\p(\mathcal B^{(t)})$ is the mass of the to-be-divided intervals at the $t$-th iteration. 
Since $\p(\mathcal B^{(t)})$ shrinks exponentially in every iteration, 
the samples consumed are dominated by the last iteration. Hence, at most 
$
O\lp( k \log(kT) / \eps \rp)
$
samples are used in total in the dividing phase.

At the $t$-th iteration, the partition size $K \eqdef \abs{\mathcal I^{(t+1)}}$ 
is upper bounded by $O(T \cdot k )$. Hence, by Lemma \ref{lem:sieve}, 
the \algname{Learn-And-Sieve} procedure consumes in total
\begin{equation}
 O\lp( 
    \lp( \frac{T^2 k}{\eps^2} + \frac{ T\sqrt{T k n} }{ \eps} \rp) \cdot \log \lp( n T )  \rp)
\rp)
\end{equation}
samples. 
The process of testing the mass of $\mathcal B^{(t)}$ 
takes $\Theta( T \cdot \log(1/T) /\eps)$ samples in total.

After the algorithm exits the for loop, the chi-squared tester uses 
$\Theta( \sqrt{n} / \eps^2  )$ samples.
Thus, overall, the algorithm takes
\begin{equation}
O\lp( 
    \frac{\sqrt{n}}{\eps^2} +
    \lp( \frac{k}{\eps^2} \cdot \log^2(1 / \eps)  + \frac{ \sqrt{k n} }{ \eps}  \cdot  \log^{3/2} (1 / \eps) \rp) \cdot \log \lp( n \log(1 / \eps) \rp)
\rp)
\end{equation}
samples, where we summed what is used by the different routines 
and substituted $T = 3 \log(1 / \eps)$. This concludes the proof of the theorem.
\end{proof}

\section{Sample Complexity Lower Bound} \label{sec:lb}
In this section, we describe the hard instances of the histogram testing problem, 
which leads to our $\widetilde \Omega( \sqrt{kn} / \eps  + k / \eps^2 )$ sample complexity lower bound.
As is standard, we will apply the so-called Poissonization trick: 
we will relax $P$, the unknown object being tested, to be a positive measure with total mass $\Theta(1)$. 
We refer to such a measure as an \emph{approximate probability vector},
 and give the corresponding notion of histogram.

\begin{definition}[Approximate Probability Vector]
For $0< \nu <1$, we define the set of \emph{$\nu$-approximate probability vectors} (APV) on the domain $[n]$ by
$
\tilde{\mathcal P}^n(\nu)\eqdef 
 \{ 
     P : P_{i}  \in [0,\infty)  \, \forall i \in [n]\, ,
     \abs{ \norm{P}_1 - 1 } \leq \nu
 \}.
$ 
Accordingly, the set of \emph{histogram APV} is given by 
\[
\tilde{\mathcal H}^n_k(\nu) \eqdef
 \left\{ 
     P \in \tilde{\mathcal P}^n(\nu): \frac{P}{ \snorm{1}{P} } \in \mathcal{H}^n_k 
\right\}.
\]
\end{definition}
To establish our sample complexity lower bounds, 
instead of the multinomial model (where exactly $\ns$ samples are taken from a distribution $P$), 
we will instead work under the related \emph{Poisson sampling model}, 
where the number of samples is itself a Poisson random variable. 
Under this setting, given an unknown $P \in \tilde {\mathcal P}^n(\nu)$, 
the goal it to decide whether $P \in \tilde{\mathcal H}_k^n(\nu)$ or 
$P$ is at least $\eps(1 + \nu)$-far\footnote{The extra $(1 + \nu)$ factor accommodates 
the fact that $P$ may not be a distribution, i.e., $1 \leq \snorm{1}{P} < (1 + \nu)$.} 
from any $P' \in \tilde{\mathcal{H}}^n_k(\nu)$ in $\ell_1$-distance 
when given the vector $\{ M_1, M_2, \cdots M_n \}$, 
where $M_i \sim \Poi( m \cdot P_i )$. We denote the sample complexity 
of the problem by $\ns_{\rm{}hist}^{\rm{poi}}(n, k, \eps, \nu)$ 
and provide its formal definition below.

\begin{definition}[Histogram Testing under Poisson Model]
For $0 < \nu < 1$, $0 < \eps < 1$, define the sample complexity of histogram testing (under the Poisson model) as
\begin{align*}
&\ns_{\rm{}hist}^{\rm{poi}}(n, k, \eps, \nu) \eqdef
\min \bigg \{ 
\ns \geq 0 : \exists \hat I_m  \text{ s.t. } 
\sup_{P \in \mathcal{M}_{\eps}(\nu) } \Pr_{P} [ \hat I_m = 1 ] +  
\sup_{P \in \tilde{\mathcal{H}}^n_k(\nu) } \Pr_P [ \hat I_m = 0 ]
\leq \frac{1}{10} \bigg \} \,,
\end{align*}
where $\hat I_m$ is a binary indicator measurable with respect to 
$M = (M_1, M_2, \dots, M_n) \sim \poisson{\ns P_1}\otimes\poisson{\ns P_2}\otimes \cdots\otimes \poisson{\ns P_n}$,\footnote{We remark that the choice of the constant $1/10$ for error rate is arbitrary 
and our argument can easily be adapted to show a lower bound on the sample complexity 
under any constant error rate $\delta \in (0, 1/2)$.}, 
and $\mathcal{M}_{\eps}(\nu)$ is given by
\[
	\mathcal{M}_{\eps}(\nu) \eqdef \{ P \in \tilde{\mathcal P}^n : 
	\inf_{Q\in \tilde{\mathcal{H}}^n_k(\nu)} \frac{1}{2}\ell_1( P, Q ) \geq \eps(1 + \nu)  \}\,,
\]
that is, is the set of approximate probability vectors 
far (in $\ell_1$ distance) from being histogram APVs. %
\end{definition}

The core of the argument then lies in showing that $\ns_{\rm{}hist}^{\rm{poi}}(n, k, \eps, \nu)$ 
is bounded below by $\widetilde \Omega ( \sqrt{nk} / \eps )$ and $\widetilde \Omega ( k / \eps^2 )$,  
where $0 < \nu < 1$ is a constant.
To do so, we follow the idea of \emph{moment matching} illustrated 
in \cite{Valiant11,valiant2013estimating,Wu2016}. In particular, one first constructs 
two discrete non-negative random variables $U, U'$ whose first few moments are identical. 
Moreover, $U$ and $U'$ will be designed to have different properties 
such that one can use i.i.d.\ copies of $U$ (and $U'$) to generate random measures 
that are histograms (and far from histograms respectively). 

Our construction of such a pair of random variables is based on \emph{Chebyshev's polynomials}, 
a standard tool in approximation theory and the parameter estimation literature.  
The two variables will be supported on the roots of the polynomial 
$p(x) = x \left(x - \frac{1}{n}\right)\left(x - \frac{2}{n}\right) T_d\lp( 1 -  \frac{\sqrt{kn}}{C \cdot \log^2 n} \cdot x  \rp)$, 
where $T_d(\cdot)$ is the \emph{Chebyshev's polynomial} (of the first kind) and $C$ is a sufficiently large constant.
More precisely, $U$ will be supported on roots $r$ where the derivatives $p'(r) < 0$, 
$U'$ will be on roots where $p'(r) > 0$, and the probabilities will be proportional to $p'(r)$ accordingly. 
Consequently, $U$ will most likely be $1/n$ (hence, useful for histogram construction) 
and $U'$ will most likely be $0$ or $2/n$, each with non-trivial probabilities 
(hence, appropriate for non-histogram construction). Moreover, 
they will have their maximums bounded by $\widetilde O({1}/{\sqrt{kn} })$, 
which is crucial to achieve the nearly optimal lower bounds. 
The detailed construction and analysis are provided in Appendix~\ref{appendix:var-construct}.
\begin{lemma} \label{lem:rv-pair}
Given positive integers $k, n$ where $k<n$,
there exists a pair of non-negative random variable $U,U'$ supported on $[0, 1)$ and absolute constants $c,c'>0$ satisfying
    (i) $\Pr \lp[ U \neq \frac{1}{n} \rp] \ll \frac{k}{n}$.
    (ii) $\Pr \lp[ U' = 0 \rp] > 1/3$ and $\Pr \lp[ U' = \frac{2}{n} \rp] > 1/3$.
    (iii) $U,U' \leq { c' \log^2n }/{\sqrt{kn} }$.
    (iv) $\E[U] = \E[U'] = \frac{1}{n} ( 1 + O(\sqrt{k/n}) )$.
    (v) $\E[U^t] = \E[U'^t]$ for $1 \leq t \leq c \cdot \log n $.
\end{lemma}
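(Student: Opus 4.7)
The plan is to construct $U$ and $U'$ as atomic random variables supported on the roots of the polynomial $p(x) = x(x - 1/n)(x - 2/n)\,T_d(1 - \alpha x)$ where $\alpha \eqdef \sqrt{kn}/(C\log^2 n)$ and $d \eqdef c\log n$ for appropriate absolute constants $c,C$. This polynomial has $d+3$ distinct real roots: the ``special'' roots $0, 1/n, 2/n$ and the $d$ Chebyshev roots $x_j = (1-\cos\theta_j)/\alpha$ with $\theta_j = (2j-1)\pi/(2d)$. All lie in $[0, 2/\alpha] = [0, 2C\log^2 n/\sqrt{kn}]$, which immediately yields property (iii). Split the roots into $R^- = \{r : p'(r) < 0\}$ and $R^+ = \{r : p'(r) > 0\}$, and define $U$ (resp.\ $U'$) supported on $R^-$ (resp.\ $R^+$) with probabilities proportional to $1/|p'(r)|$.

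The moment-matching property (v) is then essentially automatic. Because $p$ has $d+3$ simple real roots, the standard partial-fractions identity gives
\[
\sum_{i=1}^{d+3} \frac{r_i^t}{p'(r_i)} = 0 \quad \text{for all } 0 \leq t \leq d+1,
\]
(read off from the coefficient of $1/x$ in $x^t/p(x)$ at infinity). Rearranging this identity into its positive and negative parts shows that the normalizing constants $Z^+ = \sum_{R^+} 1/|p'(r)|$ and $Z^- = \sum_{R^-} 1/|p'(r)|$ coincide (take $t=0$) and that $\E[U^t] = \E[U'^t]$ for all $t \leq d+1 = \Theta(\log n)$.

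Next I would carry out a sign analysis of $p'$ at the special roots. Since $T_d(1) = 1$ and $\alpha/n \ll 1$ makes $T_d(1-\alpha/n), T_d(1-2\alpha/n)$ positive and close to $1$, direct calculation yields $p'(0), p'(2/n) > 0$ and $p'(1/n) < 0$, so that $\{0, 2/n\} \subset R^+$ and $1/n \in R^-$. Thus $U$ is concentrated near $1/n$ and $U'$ near $\{0, 2/n\}$. To make properties (i) and (ii) precise, I would compute $|p'(1/n)|, |p'(0)|, |p'(2/n)| \asymp 1/n^2$ and then estimate $|p'(x_j)|$ at each Chebyshev root using the Chebyshev derivative identity $T_d'(\cos\theta) = d\sin(d\theta)/\sin\theta$ together with the substitution $x_j = 2\sin^2(\theta_j/2)/\alpha$ and $\sin\theta_j = 2\sin(\theta_j/2)\cos(\theta_j/2)$. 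This produces a bound of the form $1/|p'(x_j)| \lesssim \alpha^2\cos(\theta_j/2)/(d\sin^5(\theta_j/2))$ whenever $x_j$ is bounded away from $\{0, 1/n, 2/n\}$. Summing these contributions, using that the sum $\sum_j 1/\sin^5(\theta_j/2)$ evaluates to $O(d^5)$ and is dominated by small-$j$ terms, shows that the Chebyshev-root weight in $R^-$ is $\lesssim \alpha^2 d^4/n^2 \cdot n^2 = O(\alpha^2 d^4/n^2)$ relative to the weight $n^2$ at $1/n$, which tunes to $O(k/n)$ for suitable $C$. This gives (i); the same mechanism applied to $R^+$ (where both $0$ and $2/n$ contribute weight $\asymp n^2$) yields (ii). Property (iv) then follows from $\E[U] = 1/n \cdot \Pr[U=1/n] + O(1/\sqrt{kn}) \cdot O(k/n) = (1/n)(1 + O(\sqrt{k/n}))$ (up to log factors absorbed into the constants), and $\E[U'] = \E[U]$ from Step~1.

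The main obstacle will be the delicate estimation in the Chebyshev-root bound at roots $x_j$ that happen to lie very close to one of the special roots $0, 1/n, 2/n$, since the factor $x_j(x_j-1/n)(x_j-2/n)$ in $p'(x_j)$ can become small and inflate $1/|p'(x_j)|$. One needs a separate case analysis in that boundary regime (for instance, when $k$ is close to $n$, the smallest Chebyshev root $x_1$ can be comparable to $2/n$), either by perturbing the Chebyshev arguments to avoid near-coincidences, or by noting that such boundary roots contribute only finitely many additional atoms whose combined weight is still controllable. Getting this calibration to yield exactly $O(k/n)$ in (i) and constants $> 1/3$ in (ii), simultaneously with the degree $d = \Theta(\log n)$ needed for (v), is where most of the work lies; the constants $c, C$ must be tuned jointly to make all bounds hold.
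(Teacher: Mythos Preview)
Your construction and overall approach are exactly those of the paper: the same polynomial $p(x)=x(x-1/n)(x-2/n)T_d(1-\alpha x)$ with $\alpha=\sqrt{kn}/(C\log^2 n)$ and $d=c\log n$, the same partial-fractions identity for moment matching, and the same sign analysis placing $1/n$ in the support of $U$ and $\{0,2/n\}$ in that of $U'$.

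Two small clarifications. First, the ``main obstacle'' you anticipate does not actually require a case analysis: once $C$ is chosen large enough relative to $c$, the smallest Chebyshev root satisfies $x_1\asymp 1/(\alpha d^2)=C/(c^2\sqrt{kn})>2/n$ (since $k<n$), so every Chebyshev root lies strictly to the right of $2/n$ and the factors $x_j,\,x_j-1/n,\,x_j-2/n$ are all $\asymp x_j$. The paper simply absorbs this into the choice of constants. Second, your argument for (iv) via the crude bound $\Pr[U\neq 1/n]\cdot\max U$ would pick up an extra $\log^2 n$; to get the clean $O(\sqrt{k/n})$ you should instead sum the actual contributions $\Pr[U=x_j]\cdot x_j\asymp (\alpha^2 d^4/(n^2 j^5))\cdot(j^2/(\alpha d^2))=\alpha d^2/(n^2 j^3)$, which totals $O(\alpha d^2/n^2)=O(\sqrt{k/n}/n)$ with no logarithmic loss.
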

We then proceed to construct two families of Approximate Probability Vectors, 
one of which belongs to $\tilde{\mathcal H}^n_k$ and the other far from it 
using the random variables stated in Lemma~\ref{lem:rv-pair}.
To do so, we define 
$
H = \lp( {1}/{n}+\eps U^{(1)}, \cdots, {1}/{n}+\eps  U^{(n)} \rp)$,
$H' = \lp( {1}/{n}+\eps U'^{(1)}, \cdots, {1}/{n}+\eps  U'^{(n)} \rp) \,
$
where \smash{$U^{(i)}$}, \smash{$U'^{(i)}$} are $n$ i.i.d.\ copies of  $U$, $U'$ in Lemma~\ref{lem:rv-pair}.

We address the two regimes $\sqrt{k/n} \leq \eps \log^2 n$ and $\sqrt{k/n} \geq \eps \log^2 n$ separately. 
In the former case, the heaviest elements among $H$ and $H'$ are roughly $ \widetilde{\Theta}( \eps / \sqrt{kn}  )$. 
Hence, when the algorithm takes $\widetilde{o}( \sqrt{kn}/ { \eps } ) $ samples, 
it rarely sees any element appearing a large number of times. 
By the moment-matching property of $U$ and $U'$, the probabilities of seeing some elements appearing 
$t$ times for $t \leq \log n$ are almost identical under $H$ and $H'$, 
therefore making $H$ and $H'$ indistinguishable. 
In the latter case, we have $\eps U \ll \frac{1}{n}$, implying that no elements in the measures 
are significantly heavier than the rest. As a result, $H$ and $H'$ are both almost uniform 
except with a different number of ``bumps'' (elements that are slightly heavier). 
Subsequently, the algorithm needs more samples (about $\widetilde \Omega(k / \eps^2)$) 
to tell whether a certain element is heavier than the rest, 
leading to a phase transition in the sample complexity of the problem.
We remark that whether the term $\widetilde \Omega(k / \eps^2)$ or the term $\widetilde \Omega(\sqrt{nk} / \eps)$ 
dominates depends exactly on the relationship between $\sqrt{k/n}$ and $\eps$ (omitting polylogarithmic factors). 
Combining the two regimes then gives us the following lower bound:
\begin{proposition} \label{lem:lower-bound-1}
There exists a constant $\nu \in (0,1)$ such that for any sufficiently large $n$ and $\eps \in  (0, 1/10)$, it holds 
$
\ns_{\rm{}hist}^{\rm{poi}}(n, k, \eps, \nu) \geq \Omega (
\max( 
{\sqrt{kn}}/{ ( \eps \log n ) } \, ,
{k}/{ ( \eps^2 \log^3 n ) }
)
).
$
\end{proposition}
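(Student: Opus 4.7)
The plan is to use the random variables $U,U'$ from Lemma~\ref{lem:rv-pair} to build the YES-ensemble $H$ and NO-ensemble $H'$ displayed right before the proposition, and then separately establish: (a)~$H$ is a histogram APV with probability at least $9/10$; (b)~$H'$ is $\eps(1+\nu)$-far from every histogram APV with probability at least $9/10$; and (c)~under Poisson sampling with $m = \tilde o(\sqrt{kn}/\eps + k/\eps^2)$, the two observation laws are within TV distance $1/10$. The constant $\nu$ can be chosen small so that, by property (iv) together with Chebyshev, both $\|H\|_1$ and $\|H'\|_1$ lie in $[1-\nu,1+\nu]$ with high probability; a small shrinkage of $\eps$ absorbs the resulting $(1+\nu)$ factors in the distance to histograms.

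For (a), property (i) implies $\shortexpect[|\{i:U^{(i)}\ne 1/n\}|] \ll k$, so a Chernoff bound forces the number of ``defects'' to be $\le (k-1)/2$ w.h.p. Each defect contributes at most two breakpoints to $H/\|H\|_1$, so $H\in\tilde{\mathcal{H}}_k^n(\nu)$. For (b), split $[n]$ into $k$ equal blocks of length $n/k$. By property (ii) and Chernoff, every block contains a constant fraction of coordinates with $U'^{(i)}=0$ and a constant fraction with $U'^{(i)}=2/n$; hence for any constant $v$ the $\ell_1$ error on a block is at least $\Omega((n/k)\cdot(\eps/n)) = \Omega(\eps/k)$. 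Since any $k$-histogram is constant on each block of \emph{some} $k$-piece partition, the best such partition still pays $\Omega(\eps/k)$ per block on a constant fraction of the blocks (after refining at the pieces' breakpoints, which costs at most $k$ blocks), yielding total $\ell_1$-distance $\Omega(\eps)$.

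The hard part is (c). Since the $U^{(i)}$ (respectively $U'^{(i)}$) are i.i.d. and the Poissonization decouples the coordinates, the two observation laws are product measures $f^{\otimes n}$ and $g^{\otimes n}$, where, setting $\lambda:=m/n$ and $\eps_0:=m\eps$,
\begin{equation*}
f(j) = \shortexpect_{U}\!\left[e^{-(\lambda+\eps_0 U)}(\lambda+\eps_0 U)^j/j!\right], \qquad g(j) = \shortexpect_{U'}\!\left[e^{-(\lambda+\eps_0 U')}(\lambda+\eps_0 U')^j/j!\right].
\end{equation*}
By tensorization, $\chi^2(f^{\otimes n},g^{\otimes n}) = (1+\chi^2(f,g))^n - 1$, so it suffices to prove $\chi^2(f,g) = O(1/n)$, after which Cauchy--Schwarz gives $\TV(f^{\otimes n},g^{\otimes n}) \le 1/10$. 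Expanding $(\lambda+\eps_0 U)^j$ binomially and $e^{-\eps_0 U}$ as a power series, $f(j)-g(j)$ becomes a Taylor series in the moment differences $\shortexpect[U^r]-\shortexpect[U'^r]$; property (v) eliminates all terms of order $r\le c\log n$, and property (iii) provides the uniform cap $U,U' \le c'\log^2 n/\sqrt{kn}$ needed to control the remaining tail.

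The main obstacle is therefore the quantitative bound on this surviving Taylor tail, which must be summed against the weights $1/g(j)$ in $\chi^2(f,g)$. The two regimes separate naturally at the threshold $\eps\log^2 n \asymp \sqrt{k/n}$: when $\eps\log^2 n \ge \sqrt{k/n}$, the random increment $\eps_0 U$ can exceed $\lambda$ and the dominant contribution is a geometric series in $\eps_0\cdot c'\log^2 n/\sqrt{kn}$, forcing $m \ll \sqrt{kn}/(\eps\log n)$; when $\eps\log^2 n < \sqrt{k/n}$, $\eps_0 U \ll \lambda$ so the tail behaves like a large-$\lambda$ moment comparison and the bound becomes $m \ll k/(\eps^2\log^3 n)$. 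This Stirling/factorial bookkeeping, which pairs each surviving Taylor coefficient with the Poisson weight in $g(j)$, follows the Chebyshev moment-matching template of~\cite{valiant2013estimating, wu2019chebyshev, Wu2016} and is deferred to the appendix; once $\chi^2(f,g)=O(1/n)$ is established in both regimes, combining it with (a) and (b) via the usual Le~Cam two-point argument yields the claimed bound.
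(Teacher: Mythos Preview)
Your construction and three-part plan match the paper's, and part~(a) is fine (the paper in fact uses Markov rather than Chebyshev, since $U,U'\ge 0$ gives $\|H\|_1\ge 1$ for free and only the upper tail matters). But two places diverge from the paper in ways that matter.

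\textbf{Part (b).} The block argument has a counting slip. You partition $[n]$ into exactly $k$ equal blocks and then discard the ``at most $k$'' blocks meeting a breakpoint of the candidate $k$-histogram; but with only $k$ blocks and up to $k-1$ breakpoints this can leave a single surviving block, not a constant fraction, and you only get $\Omega(\eps/k)$. The fix is easy (take $Ck$ blocks for any fixed $C>1$), but as written the conclusion does not follow. The paper's argument is different and cleaner: it shows by Chebyshev on the count that $H'$ has at least $n/10$ \emph{right border pairs}, i.e., indices $i$ with $(U'^{(i)},U'^{(i+1)})=(0,2/n)$; such pairs are automatically element-disjoint, any $k$-histogram is constant across all but $k-1$ of them, and each surviving pair forces $\ell_1$ error $2\eps/n$, giving total distance $(n/10-k)\cdot 2\eps/n = \Omega(\eps)$ directly.

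\textbf{Part (c).} The paper does \emph{not} go through $\chi^2$ tensorization. It uses the cruder bound $\TV(f^{\otimes n},g^{\otimes n}) \le n\cdot \TV(f,g)$ and then controls the single-coordinate TV by two off-the-shelf lemmas: in the regime $\eps\log^2 n > \sqrt{k/n}$, Lemma~4 of~\cite{wu2019chebyshev} (if $V,V'$ are $[0,\Lambda]$-valued with $L$ matching moments then $\TV(\E[\Poi(V)],\E[\Poi(V')]) \le (e\Lambda/2L)^L$); in the regime $\eps\log^2 n < \sqrt{k/n}$, Theorem~4 of~\cite{informaticists}, which bounds the TV of shifted Poisson mixtures by a series in the higher moment differences, after which a Poisson tail bound finishes. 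Both yield $\TV(f,g) = n^{-\Omega(1)}$ with no Taylor expansion of $e^{-\eps_0 U}$, no Stirling bookkeeping, and, crucially, no need to lower-bound $g(j)$. Your $\chi^2$ route is not wrong in principle, but it is strictly harder: controlling $\sum_j (f(j)-g(j))^2/g(j)$ for $j$ beyond the bulk requires a lower bound on the mass $U'$ places at its \emph{largest} support point (otherwise $f(j)/g(j)$ can blow up there), and Lemma~\ref{lem:rv-pair} supplies no such bound. That computation is exactly what you have ``deferred to the appendix,'' and it is the entire technical content of point~(iii); so as a proof of the proposition, the proposal stops just short of the main step, where the paper instead quotes two lemmas and is done.
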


\begin{proof}%
Our goal is to argue that $H$ and $H'$, specified in Equation~\ref{eq:hard:instance} below, 
satisfy the following properties: 
(i) $H$ and $H'$ are positive measures with total mass $1 + \nu$ for some $\nu \in (0,1)$ 
with probability at least $99/100$; 
(ii) $H$ is a $k$-histogram with probability at least $99/100$ 
and $H'$ is $\Omega(\eps)$-far away from any $k$-histogram with probability at least $99/100$; 
and (iii) the distributions of the vectors $M = \{ M_1, M_2, \cdots M_n \}$ and 
$M' =\{M'_1, M'_2, \cdots M'_n \}$, where $M_i \sim \Poi( m \cdot H_i )$ and 
$M'_i \sim \Poi( m \cdot H'_i )$, are $1/4$-close to each other in TV distance, 
as long as $m = o\lp(  \max \lp(  \sqrt{kn} / \lp( \eps \cdot \log n \rp), k / \lp(\eps^2 \cdot \log^3 n \rp)  \rp)  \rp)$.
If all these properties are satisfied, the result follows by applying Le Cam's Lemma.
To achieve the above, we define the positive measures $H$, $H'$ as follows:
 \begin{equation} \label{eq:hard:instance}
 H = \bigg( \frac{1}{n}+\eps U^{(1)}, \cdots, \frac{1}{n}+\eps  U^{(n)} \bigg) \, \quad \,
 H' = \bigg( \frac{1}{n}+\eps U'^{(1)}, \cdots, \frac{1}{n}+\eps  U'^{(n)} \bigg) \,
 \end{equation}
 where $U^{(1)}, \cdots, U^{(n)}$ and $U'^{(1)}, \cdots U'^{(n)}$ are 
 $n$ i.i.d.\ copies of the random variables $U$, $U'$ defined in Lemma~\ref{lem:rv-pair}.
 
 \smallskip

We proceed to verify that each property in our goal is satisfied.
\begin{enumerate}
\item[(i)]  We first observe that the mass of $H$ is simply $1+\eps \sum_{i=1}^n U^{(i)}$. 
As stated in Lemma \ref{lem:rv-pair}, we have 
$\E \lp[ U\rp] = \frac{1}{n} \lp( 1 + O(\sqrt{k/n}) \rp)$.
Since $U^{(i)}$s are just i.i.d.\ copies of $U$, this further implies that
$\E \lp[  \sum_{i=1}^n U^{(i)} \rp] = 1 + O(\sqrt{k/n}) =O(1)$. 
Hence, by Markov's inequality, we have that $\sum_{i=1}^n U^{(i)} =  \Theta(1)$
with probability at least $99/100$. Similar arguments hold for $H'$. This shows claim (i).

\item[(ii)] Turning to (ii), recall that by construction of $U$ we have that $\Pr\lp[ U \neq \frac{1}{n} \rp] \ll k/n$. 
Hence, with probability $99/100$, there are at most $(k-1)/2$ entries in $H$ with mass other than $1/n$, 
which makes $H$ a $k$-histogram.To argue the second part, i.e., that $H'$ is far from any $k$-histogram, 
we first lower bound the number of adjacent pairs $(i, i+1)$ such that $H_i = 0, H_{i+1} = \frac{2}{n}$. 
We call such an adjacent pair a ``right border pair.''
For $U'$, we have that $\Pr\lp[ U'=0 \rp], \Pr\lp[ U'= \frac{2}{n} \rp] > 1/3$. 
Hence, in expectation, there are at least $\frac{1}{9} (n-1)$ such right border pairs. 
On the other hand, the variance of the number of right border pairs is at most $n$. 
Therefore, by Chebyshev's inequality (and assuming $n$ large enough), 
with probability $99/100$, there will be at least $ \frac{1}{9} (n-1) - 100 \sqrt{n}  \geq \frac{1}{10} n$ 
right border pairs for sufficiently large $n$. 
This implies that $H'$ is at least $\lp( \frac{n}{10} - k \rp) \cdot \frac{\eps}{n} = \Omega(\eps)$ far from any $k$-histogram. 
This concludes claim (ii).

\item[(iii)] Let $P_M$ and $P_{M'}$ be the distributions of the tuple of $m$ samples seen by the algorithm. 
By the subadditivity of the total variation distance, we have that
\begin{align} \label{eq:product-tv}
\TV\!\lp( P_M, P_M' \rp) \leq n \cdot
\TV \lp( \E_{U}\!\lp[ \Poi\mleft( \frac{m}{n} + \eps m U \mright) \rp], \E_{U'}\!\lp[ \Poi\mleft( \frac{m}{n} + \eps  m U' \mright) \rp] \rp) \;.
\end{align}
To handle the right-hand-side, we will discuss the regimes 
$\sqrt{k/n} / \log^2 n < \eps$ and $\sqrt{k/n} / \log^2 n > \eps$ separately.

\textbf{Regime I ($\sqrt{k/n} / \log^2 n < \eps$)}. 
Notice that the term $\sqrt{kn} / (\eps \cdot \log n)$ dominates. 
So it suffices to show that $\TV\lp( P_M, P_M' \rp)$ is bounded by $1/4$ 
when $m = o\lp( \sqrt{kn} / \lp(\eps \cdot \log n \rp) \rp)$.
We will use the following lemma about the distance between mixtures of Poisson distributions.
\begin{lemma}[{\cite[Lemma~4]{wu2019chebyshev}}] \label{lem:TV-mixture}
Let $V, V'$ be random variables taking values in $[0, \Lambda]$. If $\E\lp[ V^j \rp] = \E \lp[ V'^j \rp]$ for $1\leq j \leq L$, 
then
\[
\operatorname{TV}\!\lp( \E\lp[\Poi(V) \rp], \E \lp[ \Poi(V') \rp]  \rp) \leq \lp( \frac{e \Lambda }{2 L} \rp)^L \;.
\]
\end{lemma}
Now let $\Lambda \eqdef m \mleft( \frac{1}{n} +  c'\eps \frac{\log^2 n}{\sqrt{k n}} \mright)$ and 
$L \eqdef c\log n$, where $c,c'>0$ are as in Lemma~\ref{lem:rv-pair}. 
It is straightforward to check from Lemma~\ref{lem:rv-pair} 
that the random variables $V \eqdef \frac{m}{n} + \eps m U$, $V' \eqdef \frac{m}{n} + \eps m U'$ 
satisfy the assumptions of Lemma~\ref{lem:TV-mixture}. 
This implies the existence of a (small) absolute constant $c''>0$ 
such that, if $m \leq c'' \min\mleft(n\log n, \frac{\sqrt{kn}}{\eps\log n}\mright)$, then by~\eqref{eq:product-tv}
\begin{align} \label{eq:first-regime-result}
\TV\!\lp( P_M, P_M' \rp) \leq 
n \cdot
 \mleft( \frac{e}{2c} \cdot m \mleft( \frac{1}{n\log n}+ c' \frac{\eps \log n}{\sqrt{kn}} \mright) \mright)^{c\log n}
 \leq n \cdot \lp( \frac{1}{2} \rp)^{c \log n}  < \frac{1}{4} \,,
\end{align}
for sufficiently large constant $c$.
Since $H$ and $H'$ satisfy all the properties listed, by Le Cam's Lemma, 
no algorithm can distinguish between the two distributions with probability more than $3/4 + 2/100 \leq 9/10$ 
when 
\[
m = o\mleft(\min\mleft( n\log n, \frac{\sqrt{kn}}{\eps\log n}\mright)\mright)
\]
under the Poissonization model. 
Notice that $\min\mleft( n\log n, \frac{\sqrt{kn}}{\eps\log n}\mright)$ is exactly 
$\frac{\sqrt{kn}}{\eps\log n}$ under the assumption $\eps \cdot \log^2 n \geq \sqrt{k/n}$. 

\textbf{Regime II ($\eps < \sqrt{k/n} / \log^2 n$)}.
Now the term $k / \lp( \eps^2 \cdot \log^3 n\rp)$ dominates.
We will now use a result from~\cite{informaticists}, restated below:
\begin{theorem}[Theorem 4 from~\cite{informaticists}] \label{thm:informaticists}
  For any $\Lambda>0$ and random variables $X,X'$ supported on $[-\Lambda,\infty)$, we have
  \[
      \totalvardist{\E[\poisson{\Lambda+X}]}{\E[\poisson{\Lambda+X'}]}
      \leq \frac{1}{2} \mleft( \sum_{\ell=0}^\infty \frac{\abs{ \E[X^\ell] - \E[X'^\ell] }^2}{\ell!\Lambda^\ell} \mright)^{1/2}\,.
    \]
\end{theorem}
\noindent Recall that the first few moments of $U, U'$ are identical, i.e.,
$ \E[ U^t ] = \E [U'^t] $ for $1 \leq t \leq L \eqdef c \cdot \log n$. 
Let $X\eqdef \eps m U$, $X'\eqdef \eps m U'$, and $\Lambda \eqdef \frac{m}{n}$. 
Notice that we indeed have $\abs{X}, \abs{X'} \leq \Lambda$ under the assumption $\eps \ll \sqrt{k/n} / \log^2 n$, since
$$
\max ( \abs{X}, \abs{X'} ) =
\eps \cdot m \cdot \max ( \abs{U}, \abs{U'} )
\leq 
\eps \cdot m \cdot O\lp(  \frac{\log^2 n}{\sqrt{kn}} \rp)
\leq m/n = \Lambda \;.
$$
Applying Theorem \ref{thm:informaticists} then gives
\begin{align*}
  4 \cdot \TV &\lp( \E\!\lp[ \Poi\lp( \frac{m}{n} + \eps m U \rp) \rp]\, , \E\!\lp[ \Poi \lp( \frac{m}{n} + \eps m U' \rp) \rp] \rp)^2 \\
      &\leq \sum_{\ell=0}^\infty \lp(\eps m\rp)^{2\ell} \frac{\abs{ \expect{U^\ell} - \expect{U'^\ell} }^2}{\ell!(m/n)^\ell}  \\
      &= \sum_{\ell=L+1}^\infty \lp( \eps^2 m n \rp)^{\ell} \frac{\abs{ \expect{U^\ell} - \expect{U'^\ell} }^2}{\ell!} \tag{the first $L$ moments match} \\
      &\leq \sum_{\ell=L+1}^\infty \mleft( \eps^2 mn \mright)^{\ell} \frac{1}{\ell!} \lp( \frac{ c'\log^2 n }{ \sqrt{kn} } \rp) ^{2\ell} \tag{$|U|, |U'|\leq \frac{ c'\log^2 n}{\sqrt{nk}} $ (Lemma~\ref{lem:rv-pair})} \\
      &= \sum_{\ell=L+1}^\infty \lp( c'^2 \cdot \log^4 n \cdot \eps^2 / k  \cdot m    \rp)^{\ell} \frac{1}{\ell!} \;.
\end{align*}
Set for convenience $\kappa \eqdef  c'^2 \cdot \log^4 n \cdot \eps^2 / k  \cdot m $. 
Notice that $\kappa \ll L$ when $m = o\lp( \frac{k}{ \eps^2 \log^3 n } \rp) $. 
Denoting by $Y$ a $\poisson{\kappa}$ random variable,  this leads to
\begin{align*}
  & 4\TV \lp( \E \lp[ \Poi\lp( \frac{m}{n} + \eps m U \rp) \rp]\, , \E \lp[ \Poi \lp( \frac{m}{n} + \eps m U' \rp) \rp] \rp)^2 \\
      &\leq \sum_{\ell=L+1}^\infty \frac{\kappa^\ell}{\ell!} = e^{\kappa} \probaOf{Y \geq L+1 } \\
      &\leq e^{\kappa} e^{-\frac{(L+1-\kappa)^2}{2(L+1)}} 
      = e^{-\frac{1}{2}\mleft(L+1+\frac{\kappa^2}{L+1}\mright)} 
      \leq e^{-\frac{L}{2}} \;,
\end{align*}
where the second inequality is by standard Poisson concentration 
(see, e.g., the note \cite{canonne-poisson}) and $\E [Y] = \kappa \ll L$. 
This immediately gives 
\begin{align} \label{eq:U-dist2}
\TV \lp( \E \lp[ \Poi \lp( \frac{m}{n} + \eps m U \rp) \rp]\, , \E \lp[ \Poi \lp( \frac{m}{n} + \eps m U' \rp) \rp] \rp) \leq  \frac{1}{2} e^{-L/4} \;.
\end{align}
Combining Equations \eqref{eq:product-tv} and \eqref{eq:U-dist2} then yields 
\begin{align} \label{eq:second-regime-result}
\TV \lp( P_M, P_M' \rp)
\leq n \cdot e^{-L/4}  =n \cdot e^{-c \log n /4} < 1/4
\end{align}
for sufficiently large $c$. 
Equations~\eqref{eq:first-regime-result} and ~\eqref{eq:second-regime-result} together 
with our choices of regimes then conclude the proof of point (iii).
\end{enumerate}
By Le Cam's Lemma, no algorithm can distinguish between 
the two distributions as constructed in Equation~\ref{eq:hard:instance} with probability 
more than $3/4 + 2/100 < 9/10$, 
when $m = o\!\lp( \max \lp( \sqrt{nk} / \lp( \eps \cdot \log n \rp) , {k}/{ \lp( \eps^2 \log^3 n \rp) } \rp) \rp) $ 
under the Poissonized sampling model.
\end{proof}

As previously mentioned, we can easily translate our lower bound result in the Poissonized sampling model 
to the Multinomial (standard fixed-size) sampling model by a standard reduction. 
Combining it with the known $\Omega(\sqrt{n}/\eps^2)$ bound (see~\cite[Proposition 4.1]{Canonne16}) 
then concludes our lower bound argument, and establishes the lower bound stated in Theorem~\ref{thm:main}; details follow.

\begin{proof}[Proof of Lower Bound Part of Theorem~\ref{thm:main}]
By Proposition~\ref{lem:lower-bound-1}, it holds that
$\ns_{\rm{}hist}^{\rm{poi}} \geq \Omega(k/(\eps^2 \cdot \log^3 k)  + \sqrt{kn} / ( \eps \cdot  \log k))$. 
We proceed to argue for sample complexity lower bound under standard sampling.
Suppose we are given a tester for fixed sample size such that it succeeds with high probability 
as long as it is given more than $m^*$ samples.
Assume that we want to use it to test whether an unknown measure $P$ is a $k$-histogram 
under the Poisson sampling model with $\tilde \ns \sim \Poi(\ns_{\rm{}hist}^{\rm{poi}}(n, \ell, \eps) \snorm{1}{P})$ samples.
We can construct an estimator which invokes the fixed sample size tester whenever $\tilde \ns \geq m^*$ and outputs fail otherwise. 

By our lower bound result for the Poisson sampling model, 
the estimator fails with probability at least $1/10$. 
On the other hand, the estimator based on the fixed-sample tester succeeds 
with high probability whenever $\tilde \ns > m^*$. 
Together this implies that
$$
m^* 
\geq 
(1 - \nu)
\cdot \ns\cdot 
\lp( 
    1 -  O \lp(
            (1 - \nu)^{-1} \cdot 
            \ns^{-\frac{1}{2}} 
            \rp)
\rp) \, ,
$$
where $\ns \eqdef \ns_{\rm{}hist}^{\rm{poi}}(n, \ell, \eps, \nu)$.
Since $\nu < 1$, it then holds 
$$
m^* > \Omega(1) \cdot  \ns_{\rm{}hist}^{\rm{poi}}(n, \ell, \eps, \nu) \geq \Omega(k/(\eps^2 \cdot \log^3 k)  + \sqrt{kn} / ( \eps \cdot  \log k)).
$$
Finally, we remark that the standard lower bound construction and analysis 
for uniformity testing can be shown to still apply to testing $k$-histograms (see~\cite[Proposition 4.1]{Canonne16}). 
This shows that we also have $m^* \geq \Omega ( \sqrt{n} / \eps^2 )$, and concludes the proof.
\end{proof}

\printbibliography
\appendix

\section*{APPENDIX}

\section{Deferred Proofs from Section~\ref{sec:ub}} \label{app:ub}
We provide in this appendix the proofs of some technical lemmas, 
which were omitted from the main paper in the interest of space. %

\begin{fact}\label{fact:uniform-mix}
Let $\mathbf u$ be the uniform distribution and $\p$ be an arbitrary distribution among $[n]$. Then, given sample access to $\p$, one can efficiently sample from $\p' \eqdef \frac{1}{2} (\p + \mathbf u)$.
\end{fact}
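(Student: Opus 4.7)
The plan is to give a trivial simulation argument via a single fair coin flip. Specifically, I would describe the following two-line algorithm: draw a Bernoulli random variable $b \sim \mathrm{Ber}(1/2)$; if $b=1$, request one sample from the oracle for $\p$ and output it; otherwise, output an element of $[n]$ chosen uniformly at random (which can be done with $O(\log n)$ random bits).

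To verify correctness, I would compute, for every $i \in [n]$, the probability that the procedure outputs $i$. Conditioning on the value of $b$ and using the independence of $b$ from the sample drawn from $\p$, this equals
\[
\tfrac{1}{2}\cdot\p(i) + \tfrac{1}{2}\cdot \mathbf{u}(i) \;=\; \p'(i),
\]
so the output is distributed exactly as $\p'$. Efficiency is immediate: each call uses one coin flip, at most one oracle call to $\p$, and $O(\log n)$ extra bits for the uniform draw, hence runs in time $\mathrm{poly}(\log n)$ per sample.

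There is no real obstacle here; the only thing to be careful about is that the two random sources (the coin $b$ and whichever of $\p$ or $\mathbf{u}$ is invoked) are independent, which holds by construction. This fact is used in Section~\ref{sec:ub} solely to justify the preliminary reduction $\p \mapsto \tfrac{1}{2}(\p + \mathbf{u}_n)$, so no stronger quantitative guarantee is needed beyond exact simulation.
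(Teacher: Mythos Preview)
Your proposal is correct and essentially identical to the paper's own proof: both flip a fair coin and output either a sample from $\p$ or a uniform element of $[n]$, then observe that the output has law $\tfrac{1}{2}\p + \tfrac{1}{2}\mathbf{u}$. The only cosmetic difference is that the paper draws the sample from $\p$ before flipping the coin (so it always consumes one oracle call), whereas you flip first and save the oracle call when $b=0$; this has no bearing on correctness or the claimed efficiency.
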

\begin{proof}
 Given a sample from $\p$, we generate a sample from $\p' \eqdef \frac{1}{2}(\p+\mathbf{u}_n)$ by outputting that sample with probability $1/2$, and a uniformly random value in $[n]$ otherwise. It is immediate to see this allows one to get $m$ i.i.d.\ samples from $\p'$ given $m$ i.i.d.\ samples from $\p$, and does not require knowledge of $\p$.
\end{proof}

\subsection{Proof of Equitable Partition: Lemma~\ref{lem:sub-divide}}
\label{appendix:partition}
To show the guarantee of \algname{Approx-Divide} in Lemma~\ref{lem:sub-divide}, we will rely on Claim~1 from \cite{AcharyaDK15}. We provide its proof for completeness.
\begin{algorithm}[H]
\caption{Approx-Divide} \label{alg:approx-divide}
    \begin{algorithmic}[1]
    \Require $m$ i.i.d. samples from distribution $\p$ on $[n]$.
    \State Let $\hat \p$ be the empirical distribution with respect to the $m$ samples.
    \State Denote by $i_1<\dots< i_T$ the elements $i$ such that $\hat \p(i)>1/(2B)$.
    \State Set $\mathcal J = \{\}$.
    \State Add the singleton interval $\{i_j\}$ to $\mathcal J$ for each $\mathcal J$.
    \For{ interval $[i_{j} + 1, i_{j+1} - 1]$ }
    \State Set $I \gets \emptyset$, $t \gets i_j + 1$
    \While{ $t < i_{j+1}$}
    \If{$\hat \p(I \cup \{t\}) < 3/(2B)$}
    \quad $I \gets I \cup \{t\}$.
    \Else
        \quad Add $I$ to $\mathcal J$.
        Start a new interval $I \gets \emptyset$.
    \EndIf
    \State Consider the next element $t \gets t + 1$.
    \EndWhile
    \EndFor
    \Return $\mathcal J$.
    \end{algorithmic}
\end{algorithm}

\begin{lemma}[Claim~1 in \cite{AcharyaDK15}] \label{lem:divide}
There exists an algorithm \textup{\algname{Approx-Divide}} that, given parameters $\delta\in(0,1]$ and integer $B > 1$, takes $O\big(B \log(B/\delta) \big)$ samples from a distribution $\p$ on $[n]$ and, with probability at least $1 - \delta$, outputs a partition of $[n]$ into $K \leq 8B$ intervals $I_1, I_2, \dots, I_K$ such that the following holds:
\begin{enumerate}
    \item For each $i \in [n]$ such that $\p(i) > 1/B$, there exists $\ell \in [K]$ such that $I_\ell = \{ i \}$. (All ``heavy'' elements are left as singletons)
    \item Every other interval is such that $\p(I) \leq 16/B$.
\end{enumerate}
\end{lemma}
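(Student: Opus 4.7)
The plan is to first condition on a concentration event ensuring that $\hat{\p}$ approximates $\p$ well enough on all the intervals the algorithm examines, and then verify each of the three conclusions deterministically under that event. I would take $m = C B\log(B/\delta)$ samples for a sufficiently large constant $C$ and work throughout with a multiplicative Chernoff bound: for any fixed interval $I$,
\[
\probaOf{|\hat{\p}(I)-\p(I)| > \tfrac{1}{10}\p(I)} \leq e^{-\Omega(\p(I)\cdot m)},
\]
and the symmetric one-sided statement for intervals of small true mass. The main technical obstacle is that the intervals examined by the algorithm are themselves functions of the samples, so the union bound has to be handled carefully. I would circumvent this by noting that the algorithm is a deterministic function of $\hat{\p}$, so it suffices to union bound, after sampling, over the at most $O(B)$ intervals actually present in the output plus the $O(B)$ elements that could conceivably be flagged as heavy singletons (those with true mass $\geq 1/(4B)$).

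Condition~(1) would follow immediately: for any $i$ with $\p(i)>1/B$, multiplicative Chernoff yields $\hat{\p}(i) \geq (9/10)\p(i) > 1/(2B)$ with failure probability $e^{-\Omega(m/B)} \leq \delta/(3B)$; union bounding over the at-most-$B$ such elements handles them all. Condition~(2) follows by contrapositive: any non-singleton output interval $I$ satisfies $\hat{\p}(I) < 3/(2B)$ by the greedy stopping rule, so if $\p(I)\geq 1/(8B)$ the concentration bound gives $\p(I)\leq (10/9)\hat{\p}(I) < 5/(3B) \leq 16/B$, while if $\p(I) < 1/(8B)$ the conclusion is trivial. The corresponding bad event is that some output interval has $\p(I) > 16/B$ while $\hat{\p}(I) < 3/(2B)$; for any such interval, the Chernoff failure probability is $e^{-\Omega(m/B)}$, and since there are at most $8B$ output intervals, a union bound gives total failure $\ll \delta$.

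The remaining step is the count of at most $8B$ intervals, which is purely combinatorial under the concentration event. The key observation: when a non-singleton interval $I$ is closed off \emph{inside} the inner loop (rather than being a trailing interval cut by the end of a gap between consecutive heavy singletons), the next element $t$ triggered the condition $\hat{\p}(I\cup\{t\})\geq 3/(2B)$. Since $t$ lies strictly between two heavy singletons, it was itself not flagged as heavy, so $\hat{\p}(t) \leq 1/(2B)$, which forces $\hat{\p}(I) \geq 1/B$. Because $\hat{\p}$ sums to $1$, there are at most $B$ such ``filled'' intervals. Adding the at most $2B$ heavy singletons (each of empirical mass $>1/(2B)$) and the at most $2B+1$ trailing partial intervals (one per gap between consecutive heavy singletons, plus the two end-gaps), the total count is at most $B+2B+(2B+1) = 5B+1 \leq 8B$. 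Combining all of the above with a final union bound over the three failure events completes the proof.
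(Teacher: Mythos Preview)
Your argument for Condition~(2) has a genuine gap. You apply a Chernoff bound to the event $\{\hat{\p}(I) < 3/(2B)\}$ and then union bound ``over the at most $8B$ output intervals''; but those intervals are themselves functions of the samples, so this is a post-hoc selection and the union bound is invalid as stated. Observing that the algorithm is a deterministic function of $\hat{\p}$ does not help, since $\hat{\p}$ is precisely the random object whose deviations you are trying to control. The honest union bound would have to range over all $O(n^2)$ subintervals of $[n]$ (or at least $\Omega(n)$ of them, if you index by left endpoint), which would force $m = \Omega(B\log(n/\delta))$ rather than the claimed $O(B\log(B/\delta))$.

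The paper handles this by introducing a \emph{deterministic} reference partition $J_1,\dots,J_S$ (with $S\leq 4B$), built from the true distribution $\p$ via the same greedy rule: isolate the elements with $\p(i)>1/B$ as singletons, and between them greedily form intervals of true mass in $[1/(2B),3/(2B)]$ (plus possibly one lighter ``rightmost'' interval per gap). Chernoff is applied only to these $O(B)$ fixed intervals and to the at most $B$ true heavy elements, so the union bound is legitimate and $m=O(B\log(B/\delta))$ suffices. Conditioned on that good event, each non-singleton output interval $\hat{J}_j$ is shown to intersect at most $8$ of the $J_{j'}$'s: if it intersected more, it would fully contain at least $6$ ``heavy'' reference intervals, each satisfying $\hat{\p}(J_{j'})\geq 1/(4B)$, contradicting the greedy invariant $\hat{\p}(\hat{J}_j)<3/(2B)$. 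Since every $J_{j'}$ has $\p(J_{j'})\leq 2/B$ under the good event, this yields $\p(\hat{J}_j)\leq 8\cdot 2/B=16/B$. Your treatment of Condition~(1) and of the combinatorial count $K\leq 8B$ is fine.
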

\begin{proof}
Let $B>1$, and consider an arbitrary (unknown) probability distribution $\p$ over $[n]$. Let $\hat{\p}$ be the empirical distribution obtained by taking $m= \clg{18 B\ln\frac{12B}{\delta}} = O(B\log(B/\delta))$ i.i.d.\ samples from $\p$.

Denote by $i_1<\dots< i_T$ the elements $i$ such that $\p(i)>1/B$; note that $0 \leq T\leq B$ since there can be at most $B$ elements with probability mass greater than $1/B$.

Consider the following (deterministic, and unknown to the algorithm) partition of the domain into consecutive intervals:
\begin{itemize}
  \item each of $i_1<\dots< i_T$ is a singleton interval $J_{\ell_j} = \{i_j\}$;
  \item setting for convenience $i_0=-1$ and $i_{T+1}=n+1$, define the remaining intervals greedily as follows. For each $0\leq j \leq T$, to partition $\{i_{j}+1, i_{j}+2,\dots, i_{j+1}-1\}$, do
      \begin{enumerate}
        \item set $J \gets \emptyset$, $t\gets i_{j}+1$
        \item while $t < i_{j+1}$
          \begin{itemize}
            \item if $\p(J\cup\{t\}) < 3/(2B)$, set $J \gets J\cup\{t\}$
            \item else add $J$ as an interval to the partition, then start a new interval: $J\gets \emptyset$
            \item consider the next element: $t\gets t+1$
          \end{itemize}
      \end{enumerate}
\end{itemize}
That is, every remaining interval (``in-between'' two heavy elements $i_j, i_{j+1}$) is greedily divided from left to right into maximal subintervals of probability mass at most $3/(2B)$. Note that since every element $i\in \{i_{j}+1,\dots, i_{j+1}-1\}$ has mass at most $1/B$, this is indeed possible and leads to a partition of $\{i_{j}+1,\dots, i_{j+1}-1\}$ in intervals, where all but at most one (the rightmost one) has probability mass $\p(J)\in [1/(2B), 3/(2B)]$.

Overall, this process (which the algorithm cannot directly run, not knowing $\p$) guarantees the existence of a partition $J_1,\dots J_S$ of the domain in at most
\[
  S \leq T + T + 2B \leq  4B
\]
consecutive, disjoint intervals 
(the $T$ singleton intervals, the at most $T$ ``rightmost, in-between'' intervals which may have probability mass at most $1/(2B)$, and the remaining ``in-between'' intervals which all have mass at least $1/(2B)$, and of which there can thus be at most $2B$ in total).

Now, consider the following partition $\hat{J}_1,\dots \hat{J}_K$ defined by the algorithm from $\hat{\p}$:
\begin{itemize}
  \item every element $i\in[n]$ such that $\hat{\p}(i)>1/(2B)$ is added as a singleton interval; let their number be $T'$;
  \item then the algorithm proceeds as in the above process (which defined $J_1,\dots J_S$), but using $\hat{\p}$ instead of $\p$.
\end{itemize}
The same analysis (but with $T' \leq 2B$ instead of $T\leq B$, since the threshold for ``singletons'' is now $1/(2B)$) shows that this will result in a partition of the domain into $K$ consecutive disjoint intervals $\hat{J}_1,\dots \hat{J}_K$, with 
\[
  K \leq 8B\,.
\]
It remains to argue about the properties of $\hat{J}_1,\dots \hat{J}_K$, by using those of (the unknown) $J_1,\dots J_S$. First, we have that, for every $1\leq j\leq T$,
by a Chernoff bound,
\[
  \probaOf{\hat{\p}(i_j) \leq \frac{1}{2B} } \leq e^{-(1/2)^2\frac{m}{2B}} \leq \frac{\delta}{3 B}
\]
from our setting our $m$. Now, for any interval $J_j$ (where $1\leq j\leq S$) such that $1/(2B) \leq \p(J_j) \leq 3/(2B)$ (there are at most $2B$ of them), we also have, all by a Chernoff bound, 
\[
  \probaOf{\hat{\p}(i_j) \notin \big[\frac{1}{4B}, \frac{2}{B}\big] } \leq 2e^{-\frac{m}{18B}} \leq \frac{\delta}{6 B}
\]
while for any $J_j$ such that $\p(J_j) < 1/(2B)$ (there are at most $T$ of them)
\[
  \probaOf{\hat{\p}(i_j) \geq 1/B } \leq e^{-\frac{m}{6B}} \leq \frac{\delta}{3 B}\,.
\]
This means, by a union bound, that with probability at least $1-\mleft(T\cdot\frac{\delta}{3B}+2B\cdot \frac{\delta}{6 B} + T\cdot\frac{\delta}{3B}\mright) \geq 1-\delta$:
\begin{itemize}
  \item all of the $(1/B)$-heavy elements for $\p$ will also be $(1/(2B))$-heavy elements for $\hat{\p}$, and thus constitute their own singleton interval, as desired;
  \item all of the intervals $J_j$ such that $1/(2B) \leq \p(J_j) \leq 3/(2B)$ (call those ``heavy'') satisfy $1/(4B) \leq \hat{\p}(J_j) \leq 2/B$;
  \item all of the intervals $J_j$ such that $\p(J_j) < 1/(2B)$ (call those ``light'') satisfy $\hat{\p}(J_j) \leq 2/B$.
\end{itemize}
Conditioning on this happening, we can analyze the properties of $\hat{J}_1,\dots \hat{J}_K$. Specifically, fix any $\hat{J}_j$ which is not a singleton of the form $\{i_\ell\}$ for one of the $T$ $(1/B)$-heavy elements for $\p$. We want to argue that $\p(\hat{J}_j) \leq 16/B$.
\begin{itemize}
  \item If $\hat{J}_j$ intersects at most 8 $J_{j'}$'s (which must then all be consecutive, and at least 7 are ``heavy'', and the rightmost one is either heavy or light), we are done, since then $\p(\hat{J}_j) \leq 8\cdot 2/B = 16/B$.
  \item Moreover, $\hat{J}_j$ cannot intersect more than 8 $J_{j'}$'s, as otherwise those are consecutive, and thus $\hat{J}_j$ must contain at least 6 ``heavy'' $J_{j'}$'s. But in that case $\hat{\p}(\hat{J}_j) \geq 6\cdot 1/(4B) = 3/(2B)$, while by our greedy construction we ensured that $\hat{\p}(\hat{J}_j) < 3/(2B)$.
\end{itemize}
This concludes the proof.
\end{proof}

\begin{algorithm}[H]
\caption{Approx-Sub-Divide} \label{alg:approx-subdivide}
    \begin{algorithmic}[1]
    \Require Sample access from $\p$ on $[n]$; A set of disjoint intervals $\I = \{I_1, \cdots, I_q \}$; failure probability $\delta$
    \State Let $L$ be the list of elements from $[n]$ formed by concatenating the intervals $I_1, \cdots, I_q$ in order.
    \State Define the measure $\tilde \w(i) = \p(L_i)$ supported on $ [ \abs{L} ] $. 
    \State Let $\w$ be the normalized distribution of $\tilde \w$.
    \State Take $C \cdot  B \log (B / \delta)$ many samples from $\w$ for a sufficiently large $C>0$. Denote the multiset of samples by $\mathcal X$.
    \Comment{One can sample from $\w$ by  performing rejection sampling from $\p$.}
    \State $\mathcal J \gets \algname{Approx-Divide}(\mathcal X, \w, B)$.
    \State $\mathcal S_i \gets \emptyset$ for $1 \leq i \leq  q$.
    \For{Interval $J = [a,b] \in \mathcal J$}
        \If{ $L(a)$ and $L(b)$ are in the same interval $I_j$  }
            \State Add the interval $[L(a), L(b)]$ to $S_j$.
        \ElsIf{ $L(a)$ is in $I_j$ and $L(b)$ is in $I_k$ }
            \State Let $e_j$ be the right end point of $I_j$.
            and $s_k$ be the left end point of $I_k$.
            \State Add the interval $ [L(a), e_j] $ to $\mathcal S_j$.
            \State Add the interval $[s_k, L(b)]$ to $\mathcal S_k$.
            \State Set $\mathcal S_t = \{ I_t \}$ for $ j < t < k$.
        \EndIf
    \EndFor
    \Return $\mathcal S_1, \cdots, \mathcal S_q$.
    \end{algorithmic}
\end{algorithm}
\begin{proof}[Proof of Lemma~\ref{lem:sub-divide}]
Following from Lemma \ref{lem:divide}, we have, with probability at least $1 - \delta$, that the intervals in $\mathcal J$ obtained from running \algname{Approx-Divide} on the distribution $\q$ satisfies
    (i) $\mathcal J$ contains at most $8B$ intervals in total.
    (ii) Every non-singleton interval $J \in \mathcal J$ satisfies $\w(J) \leq  16/B$.

For an interval $J = [a, b] \in \mathcal J$, we can obtain a set $Q \subseteq [n]$ by mapping the element with the list $L$, i.e., add $L(x)$ to $Q$ if $x \in J$. Then, by our definition of $\w$, we will have 
$
\p(Q) = \w(J) \cdot \snorm{1}{ \tilde \w } \leq  16/B \cdot \p(\I).$
Furthermore, $Q$ will either be an interval of $[n]$ itself or a union of two disjoint intervals of $[n]$. 
In the former case, the algorithm simply adds the interval $[L(a), L(b)]$ to $\mathcal S_t$ if $[L(a), L(b)]$ is a sub-interval of $I_t$.

The latter case could only happen when $L(a) \in I_j$ and $L(b) \in I_k$ for $j < k$. The algorithm is simply adding the two intervals to $\mathcal S_j$ and $\mathcal S_k$ accordingly. As a result, the mass of each individual sub-interval still has its mass bounded by $16/B \cdot \p(\I)$ and we add at most $8B + q$ intervals to $\mathcal S_t$s.

Finally, note that we can sample from $\q$ via rejection sampling with on average $\normone{\tilde\w}^{-1} =1 / \p(\I)$ samples from $\p$. Thus, if we take $m \eqdef 2\normone{\tilde\w}^{-1} m^\ast$ samples from $\p$ (where $m^\ast \eqdef C B \log(B/\delta)$), we will get an expected $2m^\ast$ samples from $\q$; moreover, the number of samples $M$ from $\q$ follows a Binomial distribution with parameters $m$ and $\normone{\tilde\w}$. To run \algname{Approx-Divide}, we need at least $m^\ast$ samples from $\q$; the probability to obtain fewer, by a Chernoff bound, is at most
\[
	\Pr[ M < \frac{1}{2}\mathbb{E}[M] ] \leq e^{-\frac{1}{8}\mathbb{E}[M]} = e^{-\frac{m^\ast}{4}}
\]
which is at most $\delta$ as long as $m^\ast \geq 4\log(1/\delta)$ (which is satisfied by our setting, as $B\geq 1$). 
Hence, we have the algorithm takes at most $O( B \log(B / \delta)  / \p(\I))$ samples with probability at least $1 - \delta$.
\end{proof}

\subsection{Detailed Analysis of \textbf{Learn-And-Sieve}: Lemma~\ref{lem:sieve}} \label{appendix:learn-and-sieve}
We now turn to the proofs deferred from Section~\ref{ssec:sieve}, 
and establish the guarantees of Algorithm~\ref{alg:raw_sieve} as stated in Lemma~\ref{lem:sieve}.

\begin{proof}[Proof of Lemma~\ref{lem:sieve}]
The second point is easy to see since otherwise Line \ref{line:more-than-k} would reject.
We claim that the following holds with probability at least $1 - \delta$. If $\p \in \mathcal H_k$, $\Bad$ contains all the $\eps$-bad intervals and no non-breakpoint intervals. Conditioned on that, it is easy to see $\Bad$ would contain no more than $k$ intervals since there are at most $k$ break-point intervals. Besides, for any $I \in \G \eqdef \I \backslash \B$, we have $\dchi{I}{\p}{\hat \p} \leq \eps^2/K$. Overall, it then holds $\dchi{ \mathcal I  \backslash \mathcal B }{\p}{\hat{\p}} \leq \eps^2$. This concludes the proof of the second point.

It remains to show why $\Bad$ separates the bad intervals from the rest. We will separately establish that with high probability $\Bad$ will not include any non-breakpoint intervals and that it will include all bad intervals separately, and conclude by a union bound. The following statement, which follows from Lemma~\ref{lem:interval-mass-estimate} , will be useful in both aspects of the analysis. In particular, it holds
\begin{align}
&\abs{ \pmap(Q) - \p(Q)  } \leq \sqrt{ \p(Q)/b } \label{eq:tilde-Q-diff}\\
&\pmap (Q) \leq 3 \cdot \p(Q) \label{eq:tilde-Q-ratio}
\end{align}
for all sub-intervals $Q$ with probability at least $1 - \delta/2$. We will condition on that in the analysis.

\paragraph{Exclusion of non-breakpoint intervals.} By Lemma \ref{lem:learn:modified}, if $I$ is not a break-point interval, then for every sub-interval $Q$,  with probability at least $1 - \delta/3$, we have
\begin{align}
    \abs{ \hat{\p}(Q) - \p(Q)  } &\leq \sqrt{ \p(Q)/b }  \label{eq:learn-diff-guarantee} \\
    \frac{\p(Q)}{ \hat{\p}(Q)} &\leq  \max( 2, 8 n / b) 
    \label{eq:learn-ratio-guarantee}
    \, ,
\end{align}
As a result, we have
\[
(\pmap(Q)- \hat{\p}(Q))^2 \leq 4 \cdot \p(Q)/b
\leq \max\lp(2/b, 8n/b^2\rp) \cdot  \hat{\p}(Q)
\leq \frac{1}{4} \hat{\p}(Q) \cdot \eps^2/K \, ,
\]
where the first inequality follows from
Equations~\ref{eq:learn-diff-guarantee}, \ref{eq:tilde-Q-diff}, and the triangle inequality; the second inequality follows from the bounded ratio between $\hat{\p}(Q)$ and $\p(Q)$ (Equation~\ref{eq:learn-ratio-guarantee}) and the last inequality follows from $b \gg   \sqrt{nK}/\eps + K/\eps^2 $. This guarantees that the second condition of Line~\ref{rej:1} will be false.

Next, we shift our focus to the first condition of Line~\ref{rej:1}. 
Combining Equations~\ref{eq:learn-ratio-guarantee}and~\ref{eq:tilde-Q-ratio}, we have 
\[
\frac{\pmap(Q)}{ \hat \p (Q) }
= 
\frac{\pmap(Q)}{ \p(Q) } \cdot \frac{\p(Q)}{ \hat \p(Q)}
\leq  6\max \lp( 1, 4  n/b \rp)
\leq  6\max \lp( 1, \eps \sqrt{n/K} \rp).
\]
where the last inequality follows from $b \gg \sqrt{nK} / \eps$.
This guarantees that Line \ref{rej:1} will not pass either.
Hence, with probability at least $1 - \delta$, $\Bad$ will not include any non-breakpoint intervals.

\paragraph{Inclusion of $\eps$-bad intervals.} Let $I$ be an $\eps$-bad interval. Then, there exists a sub-interval $Q$ such that
$\p$ is constant within $Q$ and $\frac{\lp( \p(Q) - \hat{\p}(Q) \rp)^2}{\hat{\p}(Q)}
\geq \eps^2/K$, which implies that
\begin{align} \label{eq:diff-lower-bound}
\abs{\p(Q) - \hat{\p}(Q)} \geq \sqrt{ \hat \p(Q) \cdot \frac{\eps^2}{K} }.
\end{align}
We then proceed to analyze the cases $\p(Q) \leq \frac{2\eps}{\sqrt{Kn}}$ and $\p(Q) >  \frac{2\eps}{\sqrt{Kn}}$ separately. 
 
In the former case, we have 
\[
\frac{\p(Q)}{b} = \frac{\hat \p(Q)}{b} \cdot \frac{\p(Q)}{\hat \p(Q)}
\leq \hat \p(Q) \cdot 4\eps \sqrt{\frac{n}{K}} \cdot \frac{1}{b}
\leq \frac{1}{4} \hat \p(Q) \cdot \frac{\eps^2}{K} \, ,
\]
where the first inequality follows from that $\p(Q) \leq 2 \eps /\sqrt{nK}$ and $\hat \p(Q) \geq \frac{1}{2n}$, and the second inequality follows from $b \gg \sqrt{nK} / \eps$.
This together with Equations~\ref{eq:tilde-Q-diff}, \ref{eq:diff-lower-bound}, and the triangle inequality gives
\[
\abs{ \hat \p(Q) - \pmap(Q) } \geq \frac{1}{2} \sqrt{ \hat \p(Q) \cdot \frac{\eps^2}{K} } \, ,
\]
showing that the second condition of Line~\ref{rej:1} will evaluate to true.

In the latter case, since $\p(Q) > \frac{2\eps}{\sqrt{nK}} \gg \frac{1}{b}$, it holds that $\frac{\p(Q)}{b} \leq \frac{1}{4}\p(Q)^2$.
From Equation~\ref{eq:tilde-Q-diff} we then have
\begin{equation} \label{eq:tilde-bound}
    \pmap(Q) \geq \p(Q) - \sqrt{ \p(Q)/b }
    \geq \frac{1}{2} \p(Q).
\end{equation}
If $\pmap(Q) / \hat \p(Q) \geq 6\max(1, \eps \sqrt{n/K})$, the first condition of Line~\ref{rej:1} is true, implying $I$ will be included in $\B$.
Otherwise, if $\pmap(Q) / \hat \p(Q) < 6\max(1, \eps \sqrt{n/K})$ we get
\[
\frac{\p(Q)}{b} = \frac{ \hat \p(Q) }{b} \cdot \frac{ \p(Q) }{ \pmap(Q) } \cdot \frac{ \pmap(Q) }{ \hat \p(Q) }
< \frac{ \hat \p(Q) }{b} \cdot 2 \cdot 6\max(1, \eps \sqrt{n/K})
\leq \frac{1}{4} \hat \p(Q) \cdot \frac{\eps^2}{K} \, ,
\]
where the first inequality follows from Equation~\ref{eq:tilde-bound}, and the second inequality follows from $b \gg \sqrt{nK}/\eps + K/\eps^2$. Exactly as in the first case, we can then conclude that
\[
\abs{ \hat \p(Q) - \pmap(Q) } \geq \frac{1}{2} \sqrt{ \hat \p(Q) \cdot \frac{\eps^2}{K} } \, ,
\]
which shows that in this case the second condition of Line \ref{rej:1} will evaluate to true.
\end{proof}

\section{Lower Bound: Construction of Moment-Matching Random Variables} \label{appendix:var-construct}
Here we provide the details of the (explicit) construction of a pair of random variables 
satisfying the requirements of Lemma~\ref{lem:rv-pair}. 
We will rely on the following standard fact about polynomials:  %
\begin{fact} \label{fact:poly-moment}
Suppose $p$ is a degree-$d$ polynomial with distinct roots $r_1, \cdots , r_d$. 
Then, for every $0\leq k\leq d-2$, we have that $\sum_{i=1}^d \frac{ r_i^k }{ p'(r_i) } = 0$.
\end{fact}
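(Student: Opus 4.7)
The plan is to establish this via Lagrange interpolation, which gives the cleanest route. Write $p(x)=a_d\prod_{j=1}^d(x-r_j)$, so that $p'(r_i)=a_d\prod_{j\neq i}(r_i-r_j)$, and recall that the Lagrange basis polynomials at the nodes $r_1,\dots,r_d$ can be written as
\[
L_i(x)=\prod_{j\neq i}\frac{x-r_j}{r_i-r_j}=\frac{p(x)/(x-r_i)}{p'(r_i)}.
\]
For any polynomial $q$ of degree at most $d-1$, Lagrange interpolation at these $d$ distinct nodes yields $q(x)=\sum_{i=1}^d q(r_i) L_i(x)$ as a polynomial identity.

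Apply this with $q(x)=x^k$ for $0\leq k\leq d-2$; this is valid since $k\leq d-1$. The key observation is to compare the coefficient of $x^{d-1}$ on both sides. Since $p(x)/(x-r_i)=a_d\prod_{j\neq i}(x-r_j)$ is a polynomial of degree $d-1$ with leading coefficient $a_d$, the coefficient of $x^{d-1}$ in $L_i(x)$ equals $a_d/p'(r_i)$. Hence the coefficient of $x^{d-1}$ on the right-hand side is $a_d\sum_{i=1}^d r_i^k/p'(r_i)$. On the left, $q(x)=x^k$ has degree at most $d-2$, so its coefficient of $x^{d-1}$ vanishes. Since $a_d\neq 0$, we conclude $\sum_{i=1}^d r_i^k/p'(r_i)=0$, as required.

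There is no real obstacle here: the only thing to be careful about is the range of $k$, and the fact that we need $k\leq d-2$ (rather than $k\leq d-1$) precisely because at $k=d-1$ the left-hand side $x^{d-1}$ contributes a nonzero coefficient of $x^{d-1}$, yielding instead the identity $\sum_i r_i^{d-1}/p'(r_i)=1/a_d$. An alternative and equally short argument proceeds via partial fractions: since $\deg(x^k)<\deg(p)$, one writes $x^k/p(x)=\sum_{i=1}^d c_i/(x-r_i)$ with residues $c_i=r_i^k/p'(r_i)$, and then expands both sides as Laurent series at infinity; the left side is $O(x^{k-d})=O(x^{-2})$, while the right side equals $x^{-1}\sum_i c_i+O(x^{-2})$, forcing $\sum_i c_i=0$. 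Either route gives the claim in a few lines.
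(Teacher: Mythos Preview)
Your proof is correct. The paper itself does not provide a proof of this fact, stating it only as a ``standard fact about polynomials'' and using it directly in the proof of Lemma~\ref{lem:rv-pair}; your Lagrange-interpolation argument (and the partial-fractions alternative you sketch) are exactly the kind of short, standard arguments one would supply here.
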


\begin{proof}[Proof of Lemma~\ref{lem:rv-pair}]
We construct such a pair explicitly, relying on properties of Chebyshev polynomials. 
For every integer $d\geq 0$, recall that the corresponding 
Chebyshev polynomial (of the first kind) $T_d$ is given by 
$T_d( x ) = \cos \lp( d \cdot \arccos (x) \rp)$ for $x\in[-1,1]$ 
(and can be shown to be a polynomial of degree $d$). 
Then, let $\Delta \eqdef \frac{\sqrt{kn}}{C \cdot \log^2 n}$ 
and $d \eqdef c \cdot \log n$ (where $c,C>0$ are two absolute constants, suitably large), 
and consider the polynomial 
\begin{align}
p(x) = x \left(x - \frac{1}{n}\right)\left(x - \frac{2}{n}\right) T_d\lp( 1 -  \Delta \cdot x  \rp)\, ,
\end{align}
for $x\in[0,1/\Delta]$. 
This is a degree-$(d+3)$ polynomial, whose roots are listed in the table below along with the corresponding values of its derivative.
\begin{table}[H]
\begin{tabular}{|lll|}
\hline
Roots & \vline & $p'$  \\ \hline
$r_0 = 0$     &  \vline & 
$\frac{2}{n^2} \cdot T_d(1) = \frac{2}{n^2}$ \\
$r_1 = \frac{1}{n}$     & \vline & $- \frac{1}{n^2} \cdot T_d\bigg(1 - \frac{\Delta}{n} \bigg)$ \\
$r_2 = \frac{2}{n}$     & \vline & $ \frac{2}{n^2} \cdot T_d\bigg(1 - \frac{2\Delta}{n} \bigg)$ \\
$r_{2+m} = \frac{1}{\Delta} \lp( 1 - \cos 
\lp( \frac{2m-1}{2d} \pi \rp)\rp)$ for $1\leq m \leq  d$ & \vline &
$\abs{p'(r_{2+m})} = \Theta\mleft( \frac{m^5}{\Delta^2 d^4}  \mright)$ for $1\leq m \leq  d$. \\ \hline
\end{tabular}
\end{table}
The roots in the last row are those associated with the Chebyshev Polynomial $T_d$, and the last line relies on the facts that, for $r$ such that $T_d(1-\Delta \cdot r)=0$, we have
\[
p'(r) = -\Delta r(r-1/n)(r-2/n)T'_d(1-\Delta \cdot r)\,
\]
and that
$
T'_d(\cos \theta) = \frac{d\sin(d \cdot \theta)}{\sin\theta}
$. This latter identity implies that
$
|T'_d(1-\Delta r_{2+m})| = \frac{d}{|\sin \frac{2m-1}{2d}|} = \Theta \mleft(\frac{d^2}{m}\mright)
$, and since $r_{2+m} = \Theta(\frac{m^2}{\Delta \cdot d^2})$ we get the claimed bound:
\begin{equation}
    |p'(r_{2+m})| = \Theta\mleft( \Delta \cdot \frac{m^2}{\Delta \cdot d^2}\cdot  \max\mleft( \frac{m^2}{\Delta \cdot d^2}, \frac{1}{n} \mright)^2 \cdot \frac{d^2}{m} \mright)
    = \Theta\mleft( \frac{m^5}{\Delta^2 \cdot d^4} \mright)
\end{equation}
Some of these $d+3$ roots have positive derivatives, while others have negative derivatives: this will tell us which ones to use for our construction of $U$, and which ones for $U'$. Namely, for root $r$, we set (1)~$\Pr(U = r)=0$ and $\Pr(U' = r) \propto \frac{1}{ p'(r) }$ if $p'(r) > 0$, and (2)~$\Pr(U = r) \propto \frac{1}{ p'(r) }$ and $\Pr(U'=r)=0$ otherwise.

We now derive some bounds on the derivatives.
Firstly, for sufficiently large choice of $C$ (compared to $c$), notice that the weights on the Chebyshev polynomial's roots are overall bounded by
\begin{align} \label{eq:sum_chebyshev}
\sum_{m=1}^d \frac{1}{p'(r_{m+2})}
= 
\Theta(1) \cdot  \Delta^2 d^4 \sum_{m=1}^d  \frac{1}{m^5} \ll nk \, ,
\end{align}
since $\Delta^2 d^4 = kn \cdot (c^2/C)^2$ and the series $\sum_{m} \frac{1}{m^5}$ is convergent. Secondly, we claim that
\begin{align} \label{eq:chebyshev_o(1)}
    T_d\lp( 1 - \Delta/n \rp), T_d\lp( 1 - 2\Delta/n \rp) \geq \cos( \pi/3) = \frac{1}{2}.
\end{align}
This is because 
$
\cos( \pi / 3d )
\leq 1 - \frac{2}{d^2}$; while $1 - \Delta/n \geq 1- \frac{1}{C\log^2 n} \geq 1-\frac{2}{d^2}$, the latter inequality again for a sufficiently large choice of $C$ (with respect to $c$). The claim then follows by noticing that $T_d$ is monotonically increasing in the region $[\pi/3d, 1]$ and $T_d(\cos( \pi / 3d )) = \cos(\pi/3)$. We then proceed to verify each property claimed in the lemma.
\begin{enumerate}
    \item By our construction, $U$ is only supported on roots with negative derivatives. Hence, it can only takes values from $r_1 = \frac{1}{n}$ and some of the roots of the Chebyshev polynomials $r_{2+m}$. Moreover, by Equations \eqref{eq:sum_chebyshev}, \eqref{eq:chebyshev_o(1)}, the probabilities are bounded respectively by (before renormalization)
    $\Pr[U = \frac{1}{n}] \propto \Omega(n^2)$ and $\Pr[U \neq \frac{1}{n}] \propto \sum_{m = 1 }^d \mathds{1} \{ p'(r_{2+m}) < 0 \} |p'(r_{m+2})| \ll nk$. It then follows that
    $\Pr[U \neq \frac{1}{n}] \ll  \frac{nk}{nk + n^2}  \leq  \frac{k}{n} $.
    \item Similarly, we have
    $
    \Pr\lp[U' = 0\rp], \Pr\lp[U' = 2/n\rp]
    \propto \Omega(n^2)
    $ and by~\eqref{eq:chebyshev_o(1)} are within a factor $2$ of each other. Combined with the fact that
    \[
    \Pr\lp[U' \notin \{0, 2/n \}\rp]
    \propto
    \sum_{m = 1 }^d \mathds{1} \{ p'(r_{2+m}) > 0 \} p'(r_{2+m}) \ll nk\,,
    \]
    this immediately yields after renormalization that
    $\Pr\lp[U' = 0\rp], \Pr\lp[U' = 2/n\rp] \geq \frac{1}{3}$.
    \item The largest values $U,U'$ can take are the largest root of the Chebyshev polynomial, which is at most 
    $\max_{m \in [d]}\frac{ 1 }{\Delta } \lp( 1 - \cos\lp( \frac{2m-1}{2d} \pi \rp) \rp)
    \leq \frac{2}{\Delta} = \frac{2C\log^2 n}{\sqrt{kn}}
    $.
    \item First, as we used earlier,  recall that by Taylor expansion of $\cos$
    \[
    r_{m+2} = \frac{ 1 }{ \Delta } \lp(  1 - \cos \lp(
    \frac{2m-1}{2d} \pi \rp) \rp)
    = \Theta\mleft( \frac{m^2}{\Delta d^2} \mright)
    \]
    Then, it holds
    \begin{align*}
    \E \lp[ U \rp] &= 
    \Pr \lp[ 
    U = \frac{1}{n}
    \rp] \cdot \frac{1}{n}
    + \sum_{m=1}^d \Pr \lp[ U = r_{m+2} \rp] \cdot r_{m+2} \\
    &\leq 
    \frac{1}{n}
    + 
    \Theta(1) \cdot \sum_{m=1}^d \frac{\Delta^2 d^4}{n^2 m^5} \cdot \frac{m^2}{\Delta d^2}  \\
    &\leq \frac{1}{n}
    + \Theta\mleft(\frac{\Delta d^2}{n^2}\mright) \cdot \sum_{m=1}^d \frac{1}{m^3}
    \leq \frac{1}{n} \lp( 1 + O(\sqrt{k/n}) \rp).
    \end{align*}
    The argument for $\E \lp[ U' \rp] = \frac{1}{n} (1 + O(\sqrt{k/n}))$ is similar.
    \item The claim follows from Fact \ref{fact:poly-moment}.
\end{enumerate}
This concludes the proof of Lemma~\ref{lem:rv-pair}.
\end{proof}

 \section{Application to Model Selection}
  \label{sec:msel}
In this section, we detail how our testing algorithm, by a standard reduction, can be used for \emph{model selection}, i.e., to select a suitable parameter $k$ in order to succinctly represent the data.
\begin{theorem}\label{theo:msel}
There exists an algorithm which, given samples from an unknown distribution $\p$ on $[n]$, an error parameter $\eps\in(0,1]$, and a failure probability $\delta\in(0,1]$, outputs a parameter $1\leq K\leq n$ such that the following holds. Denote by $1\leq k\leq n$ the smallest integer such that $\p\in\mathcal{H}_k^n$. With probability at least $1-\delta$, (1)~the algorithm takes $\wt O( \lp( \sqrt{n} / \eps^2 + k / \eps^2 + \sqrt{kn} / \eps \rp) \cdot \log(1/\delta))$ samples, (2)~$1\leq K\leq 2k$, and (3)~$\p$ is at TV distance at most $\dst$ from some $H\in\mathcal{H}_{K}^n$. Moreover, the algorithm runs in time $\operatorname{poly}(m)$, where $m$ is the number of samples taken.
\end{theorem}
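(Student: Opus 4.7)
The plan is to use the histogram tester of Theorem~\ref{thm:main} as a black-box subroutine inside a doubling (exponential) search over the histogram complexity parameter. Concretely, for $j = 0, 1, 2, \ldots, \lceil \log_2 n \rceil$, set $K_j \eqdef \min(2^j, n)$, and run the tester on $\p$ with parameters $(n, K_j, \eps)$, each call boosted to succeed with probability at least $1 - \delta'$ (for $\delta' \eqdef \delta / (C \log n)$ and a suitable absolute constant $C$) via standard median amplification over $O(\log(1/\delta'))$ independent repetitions. The algorithm scans $j$ in increasing order and outputs the first $K_j$ for which the (amplified) tester accepts.

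Correctness will follow by a union bound so that all $O(\log n)$ tester calls behave according to their specification with probability at least $1 - \delta$. Let $k$ be the true minimum as in the theorem statement and let $j^\star$ be the smallest index with $K_{j^\star} \geq k$. Since $\p \in \mathcal{H}_k^n \subseteq \mathcal{H}_{K_{j^\star}}^n$, completeness of the tester guarantees acceptance at $K_{j^\star}$, so the scan terminates at some $K = K_{j'}$ with $j' \leq j^\star$. From $K_{j^\star - 1} < k$ (resp.\ $j^\star = 0$, i.e.\ $k = 1$) we get $K_{j^\star} < 2k$ (resp.\ $K_{j^\star} = 1 \leq 2k$), hence $K \leq 2k$. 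Moreover, because $j'$ is the first accepting index, by soundness $\p$ cannot be $\eps$-far from $\mathcal{H}_K^n$, so $\p$ is at TV distance at most $\eps$ from some $H \in \mathcal{H}_K^n$, as required by condition~(3).

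For the sample complexity, call $j$ uses $\wt O\lp(\sqrt{n K_j}/\eps + K_j/\eps^2 + \sqrt{n}/\eps^2\rp) \cdot \log(1/\delta')$ samples. Summing over $j = 0, \ldots, j^\star$, the first two terms form geometric series in $K_j = 2^j$ and are dominated by the final term at $K_{j^\star} = \Theta(k)$, while the third contributes an extra $\log n$ factor that is absorbed into $\wt O(\cdot)$. Since $\log(1/\delta') = \wt O(\log(1/\delta))$, the total is
\[
\wt O\lp((\sqrt{n}/\eps^2 + k/\eps^2 + \sqrt{kn}/\eps) \cdot \log(1/\delta)\rp),
\]
matching the claimed bound. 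Each tester call runs in $\poly(m)$ time by Theorem~\ref{thm:main}, so the overall running time is $\poly(m)$.

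The only point to watch, rather than a genuine obstacle, is that the tester's behavior on inputs that lie strictly between ``being a $K_j$-histogram'' and ``being $\eps$-far from $\mathcal{H}_{K_j}^n$'' is not pinned down by Theorem~\ref{thm:main}, so the accept/reject pattern across $j$ need not be monotone. This is why I use a linear doubling scan with early termination rather than a true bisection: correctness of the scan requires only (i)~\emph{completeness} at $K_{j^\star}$ to guarantee termination by that index (which gives $K \leq 2k$), and (ii)~\emph{soundness} at the first accepting $K$ to guarantee the $\eps$-closeness conclusion, both of which hold on the good event above.
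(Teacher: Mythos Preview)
Your proof is correct and follows essentially the same approach as the paper: a doubling search over $K_j = 2^j$ that invokes the histogram tester of Theorem~\ref{thm:main} (amplified to failure probability $\delta'$) at each stage and returns the first accepting value. The only cosmetic difference is that the paper allocates per-call failure probabilities $\delta_j = \delta/(2(j+1)^2)$ rather than the uniform $\delta' = \delta/(C\log n)$ you use, and your explicit discussion of why a linear scan (rather than bisection) is needed due to the tester's unspecified behavior on intermediate inputs is a nice point the paper leaves implicit.
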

Before proving the theorem, we discuss the various aspects of its statement. (1)~guarantees that, with high probability, the algorithm does not take more samples than what it would if it were \emph{given} $k$ and just needed to test whether it was the right value. 
(2) ensures that the output of the algorithm is a good approximation of the true, \emph{optimal} value $k$; that is, that the model selected is essentially as succinct as it gets. Finally, (3)~guarantees that even when the output is such that $K\ll k$, the parameter $K$ is still good: that is, approximating $\p$ by a $K$-histogram still leads to a sufficiently accurate representation (even though it is even more succinct that what the \emph{true} parameter $k$ would yield). 
\begin{proof}
  The model selection algorithm is quite simple, and works by maintaining a current ``guess'' $K$ as part of a doubling search, and using the tester of to check if the current value is good. Specifically: for $0\leq j \leq \clg{\log_2 n}$, run the testing algorithm of~\autoref{thm:main} (\algname{Divide-And-Learn-And-Sieve}) with parameters $n,\eps, k \eqdef 2^j$, and $\delta_j \eqdef \frac{\delta}{2(j+1)^2}$. If the testing algorithm returns $\textsf{Accept}$, then return $K\eqdef 2^j$ and stop; otherwise, continue the loop.
  
  Let $L\leq \clg{\log_2 n}$ be the number of iterations before the algorithm stops. The probability that all $L$ invocations of the testing algorithm behaved as they should is, by a union bound, at least
  \[
      1- \sum_{j=0}^L \delta_j  = 1- \sum_{j=0}^L \frac{\delta}{2(j+1)^2} \geq 1- \sum_{j=0}^\infty \frac{\delta}{2(j+1)^2} \geq 1-\delta.
  \]
  Condition on this being the case. Since the algorithm, as soon as $j \geq \clg{\log_2 k}$, returns $\textsf{Accept}$ (since then $\p\in\mathcal{H}_{2^j}^n$), we get that $L\eqdef \clg{\log_2 k}$, and that the output satisfies $K \leq 2k$, giving (2). 
  Moreover, if $K < k$, then by the guarantee of the testing algorithm this means that $\p$ was \emph{not} $\eps$-far from $\mathcal{H}^n_K$, showing (3). Finally, the sample complexity is then
  \[
      \sum_{j=0}^L \tilde{O}\mleft(\frac{\sqrt{n}}{\eps^2}\log \frac{1}{\delta_j} + \frac{2^j}{\eps^2}\log \frac{1}{\delta_j} + \frac{\sqrt{ 2^j \cdot n} }{ \eps }\mright)
      = \tilde{O}\mleft(\frac{\sqrt{n}}{\eps^2} + \frac{k}{\eps^2} + \frac{\sqrt{n k}}{ \eps }
      \mright)\log \frac{1}{\delta}.
  \]
  since $L = O(\log k)$. This shows (1), and concludes the proof.
\end{proof}
As a final remark, we note that the guarantee provided in (2) can be improved to the optimal $1\leq K\leq k$, by modifying slightly the above procedure. Namely, after finding some $K$ such that $1\leq K\leq 2k$ as before, one can run the testing algorithm for $K/2\leq i \leq K$ (not a binary search anymore), each time with parameters $n,\eps,i$, and $\delta_i = \delta/K$. By a union bound, this incurs an extra $\delta$ probability of failure, and an additional $\tilde{O}\mleft((\sqrt{n} / \eps^2+k / \eps^2 + \sqrt{kn} / \eps )\cdot \log(1/\delta)\mright)$ samples overall, but now the output after this second step will be guaranteed to be at most $k$ (with high probability).

\end{document}